\DeclareMathOperator*{\argmax}{arg\,max}
\newtheorem{theorem}{Theorem}
\newtheorem{corollary}{Corollary}
\newtheorem{statement}{Statement}
\newtheorem{lemma}{Lemma}
\newtheorem{relation}{Relation}
\newcommand{\U}{I\!\!H}  
\newcommand{\bu}{l\!\boldsymbol{h}}
\newcommand{\D}{\mathbb{C}}
\journal{Linear Algebra and its Applications}
\begin{document}

\begin{frontmatter}



\title{Markov Fundamental Tensor and Its Applications to Network Analysis}


\author{Golshan Golnari, Zhi-Li Zhang, and Daniel Boley}

\address{Department of Computer Science and Engineering, University of Minnesota}

\begin{abstract}
We first present a comprehensive review of various Markov
metrics used in the literature and express them in a consistent framework. We then introduce {\em fundamental
  tensor} -- a generalization of the well-known fundamental matrix --
and show that classical Markov metrics can be derived from it in
a unified manner.  We provide a collection of useful relations for Markov
metrics that are useful and insightful for network studies.  To
demonstrate the usefulness and efficacy of the proposed fundamental tensor in
network analysis,  we present four important applications: 1)
unification of network centrality measures, 2) characterization of
(generalized) network articulation points, 3) identification of
network most influential nodes, and 4) fast computation of network reachability after failures. 
\end{abstract}

\begin{keyword}
%
%
%
Markov chain, random walk, fundamental tensor, network analysis, centrality measures, articulation points, influence maximization, network reachability
\end{keyword}

\end{frontmatter}



\section{Introduction}
Random walk and Markov chain theory, which are in close relationship, shown to
be powerful tools in many fields from physics and chemistry to social sciences, economics,
and computer science \cite{kopp2012x,kutchukian2009fog,wilson1977quantum,acemoglu2011political,calvet2001forecasting}. For network analysis, too, they have shown
promises as effective tools \cite{he2002individual,ranjan2013geometry,golnari2015pivotality,chen2008clustering}, where the hitting time, a well-known Markov metric, is used to measure the distance (or similarity) between different parts of a network and provide more insight to structural properties of the network. We believe though that the applicability of Markov chain theory to network analysis is more extensive and is not restricted to using the hitting time. Markov chain theory enables us to provide more general solutions which cover the directed networks (digraphs) and is not tailored only to special case of undirected networks.   
 
In this paper, we revisit the fundamental matrix in Markov chain theory \cite{snell}, extend it to a more general form of tensor representation, which we call {\em fundamental tensor}, and use that to tackle four interesting network analysis applications. Fundamental tensor $\boldsymbol{F}_{smt}$ is defined \footnote{Note that the fundamental matrix is mostly denoted by $N$ in Markov chain theory literature, but since $N$ might reflect other meanings in computer science venues, we usually use $F$ (or $\boldsymbol{F}$ to denote the tensor) in our papers. } over three dimensions of source node $s$, middle (medial) node $m$, and target node $t$, which represents the expected number of times that the Markov chain visits node $m$ when started from $s$ and before hitting $t$ for the first time.
We remark that the (absorbing) fundamental matrix which is frequently referred to in this paper is different from (ergodic) fundamental matrix \cite{grinstead2012introduction} and is of special interest of the authors of this paper as: 1- it is nicely interpretable in terms of random walk, 2- it is conceptually interesting as aggregation over the (absorbing) fundamental tensor dimensions would result to different Markov metrics (Section \ref{sec:measureUnification}) and the articulation points of a network can be directly found from it (Section \ref{sec:articulpoint}), 2- it can be used for both applications that are modeled by an ergodic chain (Sections \ref{sec:measureUnification} and \ref{sec:articulpoint}) and applications modeled by an absorbing chain (Sections \ref{sec:most-influential} and \ref{sec:reachability}), 3- it is easily generalizable to absorbing Markov chain with multiple absorbing states (Section \ref{sec:target_set}) which we use to model network applications with multiple target nodes.
We show that computing the (absorbing) fundamental tensor is no harder than computing the (ergodic) fundamental matrix ($O(n^3)$) as the entire tensor can be computed very efficiently by a single matrix inversion and there is no need to compute the (absorbing) fundamental matrices for each absorbing state to form the entire tensor (which could take $O(n^4)$ of computations).

As the first application, we show that the fundamental tensor provides a unified way to compute the random walk distance (hitting time), random walk betweenness measure \cite{newman2005measure}, random walk closeness measure \cite{noh2004random}, and random walk topological index (Kirchhoff index)\cite{klein1993resistance} in a conceptual and insightful framework: hitting time distance as the aggregation of the fundamental tensor over the middle node dimension, betweenness as the aggregation over the source and target nodes, closeness as the aggregation over the source and middle node dimensions, and Kirchhoff index resulted as the aggregation over all the three dimensions.  
These four random walk measures are of well-known network analysis tools which have been vastly used in the literature \cite{boccaletti2006complex,blochl2011vertex,borgatti2005centrality,fortunato2010community,grady2006random,fouss2007random}.

In the second application, we extend
the definition of articulation points to the directed networks which has been originally defined
for undirected networks, known as cut vertices as well. We show that the (normalized) fundamental tensor nicely functions as a look up table to find all the articulation points of a directed network. Founded on the notion of articulation points, we also propose a load balancing measure for the networks. Load balancing is important
for network robustness against targeted attacks, where the balance in the loads help the network to show more resilience toward the failures. Through extensive experiments, we
evaluate the load balancing in several specific-shaped networks and real-world networks.

The applicability and efficiency of the fundamental tensor in social networks is the subject of the third application in this paper. We show that the (normalized) fundamental tensor can be used in the field of
social networks to infer the cascade and spread of a phenomena or an
influence in a network and derive a formulation to find the most
influential nodes for maximizing the influence spread over the
network. While the original problem is NP-hard, we propose a greedy algorithm which yields a provably near-optimal 
solution. We show that this algorithm outperforms the state-of-the-art as well as the centrality/importance measure baselines in maximizing the influence spread in the network. 

Since it is inefficient to use the regular reachability methods in large and dense networks with high
volume of reachability queries whenever a failure occurs in the network, devising an efficient dynamic reachability method is necessary in such cases. As the fourth application, we present a
dynamic reachability method in the form of a pre-computed oracle which is cable of answering
to reachability queries efficiently ($O(1)$) both in the case of having failures or no failure
in a general directed network. This pre-computed oracle is in fact the fundamental matrix computed for the extended network $G^o$ and target $o$. The efficiency of the algorithm is resulted from the theorem that we prove on incremental computation of the fundamental tensor when a failure happens in the network. The storage requirement of this oracle is only $O(n^2)$. 
Note that in the last two applications, the directed network $G$ does not need to be strongly connected, and our algorithms can be applied to any general network. 

For the sake of completeness, we also provide a comprehensive review of the other Markov metrics, such as hitting time, absorption probability, and hitting cost, which is a very useful metric for weighted networks and was introduced in a more recent literature \cite{fouss2007random}, but can be rarely found in Markov chain literature. In the review, we include Markov metrics' various definitions and formulations, and express them in a consistent form (matrix form, recursive form, and stochastic form). We also show that the fundamental tensor provides a basis for computing these Markov metrics in a unified manner. In addition, we review, gather, and derive many insightful relations for the Markov metrics.

The remainder of this paper is organized as follows. A preliminary on
network terminology is presented in Section~\ref{sec:prelim}.   In
Section~\ref{sec:definition}, we review and present various Markov metrics
in  a unified format. In Section~\ref{sec:relation}, we gather and
derive useful relations among the reviewed Markov metrics. Finally, four applications are presented in
Sections~\ref{sec:measureUnification},~\ref{sec:articulpoint},~\ref{sec:most-influential}, and~\ref{sec:reachability}  to demonstrate the usefulness and
efficacy of the fundamental tensor in network analysis.

\section{Preliminaries} \label{sec:prelim}
In general, a network can be abstractly modeled as a {\em weighted} and {\em
  directed} graph, denoted by
$G=(\mathcal{V},\mathcal{E},W)$. Here $\mathcal{V}$ is the set of
nodes in the network such as routers or switches in a communication
network or users in a social network, and its size is assumed to be
$n$ throughout the paper $|\mathcal{V}|=n$;  $\mathcal{E}$ is the set
of ({\em directed}) edges representing the (physical or logical)
connections between nodes ({\em e.g.}, a communication link from a node $i$
to a node $j$) or entity relations ({\em e.g.}, follower-followee relation
between two users). The  {\em affinity} (or adjacency) matrix
$A=[a_{ij}]$ is assumed to be nonnegative, {\em i.e.}, $a_{ij} \geq 0$,
where $a_{ij}>0$ if and only if edge $e_{ij}$ exists,
$e_{ij}\in\mathcal{E}$. The {\em weight} (or cost) matrix $W=[w_{ij}]$ represents the costs assigned to edges in a weighted network. Network $G$ is called strongly connected if
all nodes can be reachable from each other via at least one path. In this
paper, we focus on strongly connected networks, unless stated
otherwise.  

A random walk in $G$ is modeled by a discrete time Markov chain,
where the nodes of $G$ represent the states of the Markov chain. The target node in the network is modeled by an absorbing state at which the random walk arrives it cannot leave anymore.
The Markov chain is fully described by its transition probability
matrix: $P = D^{-1}A$, where $D$ is the diagonal matrix of
(out-)degrees, {\em i.e.}, $D=diag[d_i]$ and $d_i=\sum_{j}a_{ij}$. The $d_i$ is
often referred to as the (out-)degree of node $i$. Throughout the
paper, the words ``node" and ``state",  ``network" and ``Markov
chain" are often used interchangeably depending on the context. If the network $G$ is strongly
connected, the associated Markov chain is irreducible and the
stationary probabilities $\pi$ are strictly positive according to
Perron-Frobenius theorem \cite{gantmacher1960theory}. For an undirected
and connected $G$, the associated  Markov chain is reversible and the
stationary probabilities are a scalar multiple of node degrees:
$\pi_i=\frac{d_i}{\sum_i d_i}$.

\section{Definitions of Markov Metrics} \label{sec:definition}
 We review various Markov metrics and present them using three unified forms: 1) matrix form (and in terms of
 the fundamental matrix), 2) recursive form, and 3) stochastic form.
The matrix form is often the preferred form in this paper and we
 show how two other forms can be obtained from the matrix form. The stochastic form, however, provides a more intuitive definition
 of  random walk metrics. We also introduce {\em fundamental tensor}
 as a  generalization of the fundamental matrix and show how it can be computed efficiently.  
  
\subsection{Fundamental Matrix} 

The {\em expected number of visits}
counts the expected number of visits at a node, when a random walk
starts from a source node and before a stopping criterion. The
stopping criterion in random walk (or Markov) metrics is often ``visiting a target
node for the first time'' which is referred to as hitting
the target node. {\em Fundamental matrix} $F$ is formed for a specific target node, where the entries are the
expected number of visits at a medial node starting from a source
node, for all such pairs. In the following, the fundamental matrix 
is defined formally using three different forms. \footnote{The fundamental matrix that is referred to in this paper is defined for absorbing chain and is obtained from $F=(I-P_{11})^{-1}$. It is different from the fundamental matrix $Z=(I-P+\boldsymbol{1}\boldsymbol{\pi'})^{-1}$ which is defined for ergodic chain \cite{grinstead2012introduction}.}

\begin{itemize}

\item \textbf{Matrix form \cite{kemeny1960finite, snell}:}
Let $P$ be an $n \times n$ transition probability matrix for a strongly connected network $G$ and node $n$ be the target node. If the nodes are arranged in a way to assign the last index to the target node, transition probability matrix can be written in the form of $ P=\left[
\begin{array}{ c c }
P_{11} & \boldsymbol{p}_{12} \\
\boldsymbol{p}'_{21} & p_{nn}
\end{array} \right]$ and the fundamental matrix is defined as follows:
\begin{equation} \label{eq:N}
F=(I-P_{11})^{-1}, 
\end{equation}
where entry $F_{sm}$ represents the expected number of visits of medial node $m$, starting from source node $s$, and before hitting (or absorption by) target node $n$ \cite{snell}. Note that the target node can be any node $t$ which would be specified in the notation by $F^{\{t\}}$ to clarify that it is computed for target node $t$. 
This is discussed more in Markov metrics generalization to a set of targets (\ref{sec:target_set}).

Expanding $F_{sm}$ as a geometric series, namely,  $F_{sm}=[(I-P_{11})^{-1}]_{sm}=[I]_{sm}+[P_{11}]_{sm}+[P_{11}^2]_{sm}+...$, it is easy to see the probabilistic interpretation of the {\em expected number of visits} as a summation over the number of steps required to visit node $m$.

\item \textbf{Recursive form:} Each entry of the  fundamental matrix,
  $F_{sm}$,  can be recursively computed in terms of the entries of
  $s$'s outgoing neighbors. Note that if  $s=m$, $F_{sm}$ is
  increased by 1 to account for $X_0=m$ (the random walk starts at
  $s=m$, thus counting as the first visit at $m$).  

\begin{equation} \label{eq:F_recursive}
F_{sm}={1}_{\{s=m\}}+\sum_{j\in\mathcal{N}_{out}(s)} p_{sj}F_{jm} 
\end{equation}
It is easy to see the direct connection between the recursive form and
the matrix form: from $F=I+P_{11}F$, we have $F=(I-P_{11})^{-1}$.  

\item \textbf{Stochastic form \cite{norris1998markov}:}
Let $G={(X_k)}_{k>0}$ be a discrete-time Markov chain with the
transition probability matrix $P$, where $X_k$ is the state of Markov
chain in time step $k$. The indicator function
${1}_{\{X_k=m\}}$ is a Bernoulli random variable, equal to 1 if
the state of Markov chain is $m$ at time $k$, {\em i.e.} $X_k=m$, and 0
otherwise. The number of visits of node $m$, denoted by $\nu_m$, can
be written in terms of the indicator function: $\nu_m=\sum_{k=0}^{\infty}
{1}_{\{X_k=m\}}$. The stopping criteria is hitting target node $t$ for
the first time. In an irreducible chain, this event is guaranteed to
occur in a finite time. Hence $k<\infty$. $F_{sm}$ is defined as the
expected value of $\nu_m$ starting from $s$.

\begin{eqnarray}
F_{sm}&=&\mathbb{E}_s(\nu_m)=\mathbb{E}_s \sum_{k=0}^{<\infty} {1}_{\{X_k=m\}} = \sum_{k=0}^{<\infty} \mathbb{E}_s({1}_{\{X_k=m\}}) \nonumber
\\&=& \sum_{k=0}^{<\infty} \mathbb{P}(X_k=m|X_0=s,X_{<k}\neq t) = \sum_{k=0}^{<\infty} {[P^k_{11}]}_{sm}, 
\end{eqnarray}
where the expression  is simply the expanded version of the matrix form. 
Note that in order for  $F_{sm}$ to be finite (namely, the infinite
summation converges), it is sufficient that node $t$ be reachable from
all other nodes in network. In other words, the irreducibility of the entire network is not necessary.
\end{itemize}

\subsection{Fundamental Tensor}
We define the {\em fundamental tensor},
$\boldsymbol{F}$,  as a generalization of the fundamental matrix
$F^{\{t\}}$, which looks to be formed by stacking up the fundamental matrices constructed for
each node $t$ as the target node in a strongly connected network (Eq.(\ref{eq:fundamentalTensor})), but is in fact computed much more efficiently. In Theorem (\ref{thm:F_O3}), we show that the whole fundamental tensor can be computed from Moore-Penrose pseudo-inverse of Laplacian matrix with only $O(n^3)$ of complexity and there is no need to compute the fundamental matrices for every target node which require $O(n^4)$ of computation in total.
\begin{equation} \label{eq:fundamentalTensor}
\boldsymbol{F}_{smt}=
\begin{cases}
F^{\{t\}}_{sm}  & \text{if } s,m\neq t \\
0  & \text{if } s=t \text{ or } m=t
\end{cases}
\end{equation}
Fundamental tensor is presented in three dimensions of source node, medial (middle) node, and target node (Fig. (\ref{fig:tensor})).

\subsection{Hitting Time}

The (expected) {\em hitting time} metric, also known as the first
transit time, first passage time, and expected absorption time in the
literature, counts the expected number of steps (or time) required to
hit a target node for the first time when the random walk starts from
a source node. Hitting time is frequently used in the literature 
as a form of (random walk) distance metric for network analysis.  We
formally present it in three different forms below.

\begin{itemize}
\item \textbf{Matrix form} \cite{snell}:
Hitting time can be computed from the fundamental matrix (\ref{eq:N}) as follows:  
\begin{equation} \label{eq:H}
\boldsymbol{h}^{\{t\}}=F^{\{t\}}\boldsymbol{1},
\end{equation}
where $\boldsymbol{1}$ is a vector of all ones and $\boldsymbol{h}^{\{t\}}$ is a vector of $H_s^{\{t\}}$ computed for all $s \in \mathcal{V}\setminus \{t\}$. $H_s^{\{t\}}$ represents the expected number of steps required to hit node $t$ starting from $s$ and is obtained from: $H_s^{\{t\}}=\sum_{m} F^{\{t\}}_{sm}$. The intuition behind this formulation is that enumerating the average number of nodes visited on the way from the source node to the target node yields the number of steps (distance) required to reach to the target node. 


\item \textbf{Recursive form} \cite{grinstead2012introduction,norris1998markov,fouss2007random}:
The \textit{recursive} form of $H_s^{\{t\}}$ is the most well-known form presented in the literature for deriving the hitting time:
\begin{equation}
H_s^{\{t\}}=1+\sum_{m\in \mathcal{N}_{out}(s)}p_{sm}H_m^{\{t\}}
\end{equation}
It is easy to see the direct connection between the recursive form and
the matrix form: from
$\boldsymbol{h}=\boldsymbol{1}+P_{11}\boldsymbol{h}$, we have 
$\boldsymbol{h}=(I-P_{11})^{-1}\boldsymbol{1}$.

\item \textbf{Stochastic form} \cite{norris1998markov}:
Let $G={(X_k)}_{k>0}$ be a discrete-time Markov chain with the transition probability matrix $P$. The hitting time of the target node $t$ is denoted by a random variable $\kappa_{t}:\Omega\rightarrow\{0,1,2,...\}\cup\{\infty\}$
given by
$\kappa_{t}=\inf{\{\kappa\geq 0: X_{\kappa}=t\}}$,
where by convention the infimum of the empty set $\emptyset$ is
$\infty$. Assuming that the target node $t$ is reachable from all the
other nodes in the network,  we have $\kappa_t <\infty$. The
(expected) hitting time from $s$ to $t$ is then given by

\begin{eqnarray}
H_s^{\{t\}}&=&\mathbb{E}_s[\kappa_t]=\sum_{k=1}^{<\infty} k\mathbb{P}(\kappa_t=k|X_0=s)+\infty \mathbb{P}(\kappa_t=\infty|X_0=s) \nonumber
\\&=&\sum_{k=1}^{<\infty} k\mathbb{P}(X_k=t|X_0=s,X_{<k}\neq t) \nonumber
\\&=&\sum_{k=1}^{<\infty} k\sum_{m\neq t}\mathbb{P}(X_{k-1}=m|X_0=s,X_{<k-1}\neq t)\cdot\mathbb{P}(X_{k}=t|X_{k-1}=m) \nonumber
\\&=&\sum_{k=1}^{<\infty} k\sum_{m\neq t} {[P^{k-1}_{11}]}_{sm}{[\boldsymbol{p}_{12}]}_{m},
\end{eqnarray}
where $[P_{11}^0]_{sm}=1$ for $m=s$ and it is 0 otherwise. The
connection between the stochastic form and the matrix form can be
found in the appendix.

\end{itemize}

\subsubsection{Commute Time}

The commute time between node $i$ and node $j$ is defined as the sum
of the  hitting time from $i$ to $j$ and the hitting time from $j$ to $i$:
\begin{equation}
C_{ij}=H_i^{\{j\}}+H_j^{\{i\}}
\end{equation}
Clearly, commute time is a symmetric quantity,
{\em i.e.,} $C_{ij}=C_{ji}$. In contrast,  hitting time is in general not
symmetric, even when   the network is undirected. 

\subsection{Hitting Cost}
The (expected) {\em hitting cost}, also known as average first-passage
cost in the literature, generalizes the (expected) hitting time by
assigning a cost to each transition. Hitting cost from $s$ to $t$, denoted by $\U_s^{\{t\}}$, is
the average {\em cost} incurred by the random walk starting
from node $s$ to hit node $t$ for the first time. The cost of
transiting edge $e_{ij}$ is given by $w_{ij}$. The hitting cost was
first introduced by Fouss {\em et al.} \cite{fouss2007random} and given in
a recursive form. In the following, we first provide a rigorous definition for hitting cost
in the stochastic form, and then show how the matrix form and recursive form can be driven from this definition. 

\begin{itemize}
	
	\item \textbf{Stochastic form:}
	Let $G={(X_k)}_{k>0}$ be a discrete-time Markov chain with the
        transition probability matrix $P$ and cost matrix $W$.	The
        hitting cost of the target node $t$ is a random variable $\eta_{t}:\Omega\rightarrow \mathcal{C}$ which is defined by
	$\eta_{t}=\inf{\{\eta\geq 0: \exists k, X_k=t, \sum_{i=1}^k
          w_{X_{i-1}X_i}=\eta\}}$. $\mathcal{C}$ is a countable set. 
  If we view $w_{ij}$ as the length of edge (link) $e_{ij}$, then the
  hitting cost $\eta_{t}$ is the total length of steps that the random
  walk takes until it hits $t$ for the first time.  The expected value
  of $\eta_{t}$ when the random walk starts at node $s$ is given by
	\begin{equation} \label{eq:stoch_B}
	\U_{s}^{\{t\}}=\mathbb{E}_s[\eta_t]=\sum_{l\in\mathcal{C}} l\mathbb{P}(\eta_t=l|X_0=s)
	\end{equation}
	For compactness, we delegate the more detailed derivation of the stochastic form and its connection
        with the matrix form to the appendix.

\item \textbf{Matrix form:}
	Hitting cost can be computed from the following \textit{closed form} formulation:
	\begin{equation} \label{eq:B=Fr}
	\bu^{\{t\}}=F \boldsymbol{r},
	\end{equation}
	where $\boldsymbol{r}$ is the vector of expected outgoing costs and $\bu^{\{t\}}$ is a vector of $\U_s^{\{t\}}$ computed for all $s \in \mathcal{V}\setminus \{t\}$. The expected outgoing cost of node $s$ is obtained from: $r_s=\sum_{m\in \mathcal{N}_{out}(s)} p_{sm}w_{sm}$. Note
	that the hitting time matrix $H$ in Eq.(\ref{eq:H}) is a
        special case of the hitting cost matrix $\U$, obtained when $w_{ij}=1$ for all $e_{ij}$.

	\item \textbf{Recursive form \cite{fouss2007random}:}
	The recursive computation of $\U_{s}^{\{t\}}$ is given as follows:
	\begin{equation} \label{eq:recursive_B}
	\U_{s}^{\{t\}}=r_s + \sum_{m\in \mathcal{N}_{out}(s)}p_{sm}\U_{m}^{\{t\}}. 
	\end{equation}	
	It is easy to see the direct connection between the recursive
        form and the matrix form: from $\bu=\boldsymbol{r}+P_{11}\bu$,
        we have $\bu=(I-P_{11})^{-1}\boldsymbol{r}$.

\end{itemize}

\subsubsection{Commute Cost}
Commute cost $\D_{ij}$ is defined as the expected cost required to hit
$j$ for the first time and get back to $i$. As in the case of commute
time, commute cost is a symmetric metric and is given by
\begin{equation}
\D_{ij}=\U_{i}^{\{j\}}+\U_{j}^{\{i\}}
\end{equation} 

\subsection{Absorption Probability}

The absorption probability, also known as hitting probability in the
literature, is the probability of hitting or getting absorbed by a
target node (or  any node in a set of  target nodes) in a finite time~\cite{norris1998markov}. For a
single target node, this probability is trivially equal to 1 for all
nodes in a strongly connected network. We therefore consider  more
than one target nodes in this paper.   

Let indexes $n-1$ and $n$ be assigned to two target nodes in a
strongly connected network. We partition the transition probability
matrix $P$ as follows:
%
%
%
%

\begin{equation}
P=\begin{blockarray}{cccc}
& n-1 & n & \\
\begin{block}{[ccc]c}
P_{11} & \boldsymbol{p}_{12} &  \boldsymbol{p}_{13} &  \\
\boldsymbol{p}'_{21} & p_{n-1,n-1} & p_{n-1,n} & n-1\\
 \boldsymbol{p}'_{31} & p_{n,n-1} & p_{n,n} & n  \\
\end{block}
\end{blockarray}
\end{equation} 
where $P_{11}$ is an $(n-2)\times(n-2)$ matrix, $\boldsymbol{p}_{12}$, $\boldsymbol{p}_{13}$, $\boldsymbol{p}_{21}$, and $\boldsymbol{p}_{31}$ are $(n-2)\times1$ vectors, and the rest are scalars.
The corresponding absorption probability can be expressed in three
forms as follows:
\begin{itemize}
	\item \textbf{Matrix form \cite{snell}:} The absorption probability matrix denoted by $Q$ is a $(n-2)\times 2$ matrix whose columns represent the absorption probabilities to target $n-1$ and $n$ respectively:
	
	\begin{eqnarray} \label{eq:Q}
	Q^{\{n-1,\overline{n}\}}&=&F\boldsymbol{p}_{12},
	\\Q^{\{\overline{n-1},n\}}&=&F\boldsymbol{p}_{13},
	\end{eqnarray}
	where $F=(I-P_{11})^{-1}$. The notation
        $Q^{\{n-1,\overline{n}\}}$ emphasizes that target $n-1$ is hit
        sooner than target $n$, and $Q^{\{\overline{n-1},n\}}$
        indicates hitting target $n$ occurs sooner than target
        $n-1$. The formulation above states that to obtain the
        probability of getting absorbed (hit) by a given target when
        starting a random walk from a source node, we
        add up the absorption probabilities of starting from the
        neighbors of the source node, weighted by the number of times
        we expect to be in those neighboring nodes \cite{snell}. For a
        strongly connected network, these two probabilities are sum up
        to 1 for each starting node $s$, {\em i.e.,} $Q^{\{n-1,\overline{n}\}}_s+Q^{\{\overline{n-1},n\}}_s=1$. 
	
	\item \textbf{Recursive form \cite{norris1998markov}:} For each of the target nodes, the absorption probability starting from any source node can be found from the absorption probabilities starting from its neighbors:
	\begin{eqnarray} 
	Q^{\{\overline{n-1},n\}}_s&=&\sum_{m\in \mathcal{N}_{out}(s)}p_{sm}Q^{\{\overline{n-1},n\}}_m,
	\\Q^{\{n-1,\overline{n}\}}_s&=&\sum_{m\in \mathcal{N}_{out}(s)}p_{sm}Q^{\{n-1,\overline{n}\}}_m,
	\end{eqnarray}
	where $s,m \in \mathcal{V}\setminus\{n-1,n\}$.
	Note that the neighbors of a node can also be the target
        nodes. Thus, the right-hand side of the above equations is
        decomposed into two parts: $Q^{\{\overline{n-1},n\}}_s= p_{sn}
        + \sum_{m\neq n,n-1}p_{sm}Q^{\{\overline{n-1},n\}}_m$, and the
        same way for $Q^{\{n-1,\overline{n}\}}_s$. Now, it is easy to
        see how the recursive form is connected to the matrix form:
        from
        $Q^{\{\overline{n-1},n\}}=\boldsymbol{p}_{13}+P_{11}Q^{\{\overline{n-1},n\}}$,
        we have 
$Q^{\{\overline{n-1},n\}}=(I-P_{11})^{-1}\boldsymbol{p}_{13}$.
	 
	\item \textbf{Stochastic form \cite{norris1998markov}:}
	Let $G={(X_k)}_{k>0}$ be a discrete-time Markov chain with the
        transition matrix $P$. The hitting time of the target state $n$ before $n-1$ is denoted by a random variable $\kappa_{n}:\Omega\rightarrow\{0,1,2,...\}\cup\{\infty\}$
	given by
	$\kappa_{n}=\inf{\{\kappa\geq 0: X_{\kappa}=n,
          X_{k<\kappa}\neq n,n-1\}}$. Then the probability of ever
        hitting $n$ is $\mathbb{P}(\kappa_{n}<\infty)$
        \cite{norris1998markov}. This can be derived as follows:
	\begin{eqnarray}
	Q^{\{\overline{n-1},n\}}_s &=& \sum_{k=1}^{<\infty} \mathbb{P}(\kappa_n=k|X_0=s) \nonumber
	\\&=&\sum_{k=1}^{<\infty} \mathbb{P}(X_k=n|X_0=s,X_{<k}\neq n,n-1) \nonumber
	\\&=&\sum_{k=1}^{<\infty} \sum_{m\neq n, n-1}\mathbb{P}(X_{k-1}=m|X_0=s,X_{<k-1}\neq n,n-1)\cdot\nonumber\\&&\mathbb{P}(X_{k}=n|X_{k-1}=m) \nonumber
	\\&=&\sum_{k=1}^{<\infty} \sum_{m} {[P^{k-1}_{11}]}_{sm}{[\boldsymbol{p}_{13}]}_{m} \nonumber
	\\&=&\sum_{m}
              {[(I-P_{11})^{-1}]}_{sm}{[\boldsymbol{p}_{13}]}_{m}.
	\end{eqnarray}
The stochastic form for $Q^{\{n-1,\overline{n}\}}_s$ is derived in a
similar vein.

\end{itemize}

\subsection{Generalization: Markov Metrics for a Set of Targets} \label{sec:target_set}

Let $\mathcal{A}=\{t_1,...,t_{|\mathcal{A}|}\}$ be a set of target nodes. Then the transition probability matrix can be written in the following form:
\begin{equation}
P=\left[
\begin{array}{ c c }
P_{\mathcal{TT}} & P_{\mathcal{TA}} \\
P_{\mathcal{AT}} & P_{\mathcal{AA}}
\end{array} \right],
\end{equation}
where $\mathcal{T}=\mathcal{V}\setminus \mathcal{A}$ is the set of
non-target nodes. Note that set of target nodes
$\mathcal{A}$ can  be modeled as the set of absorbing states in a
Markov chain, and then $\mathcal{T}=\mathcal{V}\setminus \mathcal{A}$ is the set of
transient (non-absorbing) nodes. Since hitting the target nodes is the stopping criterion
for all the Markov metrics we have reviewed so far,  it does not
matter where the random walk can go afterwards and what the outgoing
edges of the target nodes are. Therefore, there is no difference
between $P=\left[ 
\begin{array}{ c c }
P_{\mathcal{TT}} & P_{\mathcal{TA}} \\
P_{\mathcal{AT}} & P_{\mathcal{AA}}
\end{array} \right]$ and $P=\left[
\begin{array}{ c c }
P_{\mathcal{TT}} & P_{\mathcal{TA}} \\
\textbf{0} & I
\end{array} \right]$ for computing the Markov metrics.

For a given set of target nodes $\mathcal{A}$, the fundamental matrix
$F^{\mathcal{A}}$ is obtained using the following relation: 
\begin{equation}
F^{\mathcal{A}}=I+\sum_{k=1}^{<\infty} P^k_{\mathcal{TT}}=(I-P_{\mathcal{TT}})^{-1},
\end{equation}
which is a  general form of the fundamental matrix defined for a
single target (Eq.(\ref{eq:N})). Entry $F^{\mathcal{A}}_{sm}$
represents the expected number of visits to $m$ before hitting
\textit{any} of the target nodes in $\mathcal{A}$ when starting a
random walk from $s$. 

A hitting time for $\mathcal{A}$ is defined as the expected number of
steps to hit the set for the first time which can occur by hitting
{\em any} of the target nodes in this set. The vector of hitting
times with respect to a target set $\mathcal{T}$ can be computed
using 
\begin{equation}
\boldsymbol{h}^{\mathcal{A}}=F^{\mathcal{A}}\boldsymbol{1}
\end{equation}

If there exists a matrix of costs $W$ defined for the network, the
hitting cost for target set $\mathcal{A}$ is given below
\begin{equation}
\bu^{\mathcal{A}}=F^{\mathcal{A}}\boldsymbol{r},
\end{equation}
where $\boldsymbol{r}$ is a vector of expected outgoing cost $r_s$'s: $r_s=\sum_{m\in \mathcal{N}_{out}(s)} p_{sm}w_{sm}$.

The absorption probability of target set $\mathcal{A}$ is a $|\mathcal{T}|\times|\mathcal{A}|$ matrix whose columns represents the absorption probability for each target node if it gets hit sooner than the other target nodes:
\begin{equation}
Q^{\mathcal{A}}=F^{\mathcal{A}}P_{\mathcal{TA}},
\end{equation}
where $Q^{\mathcal{A}}$ is a row-stochastic matrix for a strongly connected network.

We remark that if the network is not strongly connected (thus the
corresponding Markov chain is not irreducible), $I-P_{\mathcal{TT}}$
may not be  non-singular for every set of $\mathcal{A}$. Hence
$F^{\mathcal{A}}$ may not exist. The necessary and sufficient
condition for the existence of $F^{\mathcal{A}}$ is that target set $\mathcal{A}$ includes {\em at
least one node from each recurrent equivalence class} in the
network. The recurrent equivalence
class is the minimal set of nodes that have no outgoing
edge to nodes outside the set. Once a random walk reaches a node in 
a recurrent equivalence class, it can no longer get out of that set. A recurrent equivalence class can be as small as
one single node, which is called an absorbing node.

\section{Useful Relations for Markov Metrics} \label{sec:relation}
In this section, we first establish several important theorems, and then
gather and derive a number of useful relations among the Markov
metrics. We start by relating the fundamental tensor with the Laplacian
matrices of a general network. For an undirected network or graph $G$, the
graph Laplacian $L^{u}=D-A$ (where $A$ is the adjacent matrix of $G$ and
$D=diag[d_i]$ is the diagonal matrix of node degrees) and its normalized
version  $\tilde{L}^{u}=D^{-\frac{1}{2}}LD^{-\frac{1}{2}}$ have been
widely studied and found many applications (see, {\em e.g.},~\cite{chung94spectral} and the references therein). In particular, it is
well-known that  commute times are closely related to the
Penrose-Moore pseudo-inverse of $L^{u}$ (and a variant of Eq.(\ref{eq:C-L})
also holds for $\tilde{L}^{u}$):
	\begin{equation} 
	C_{ij}=L^{u,+}_{ii}+L^{u,+}_{jj} -L^{u,+}_{ij}-L^{u,+}_{ji}. \label{eq:C-L}
	\end{equation}

Li and Zhang~\cite{Yanhua-Infocom10,WAW2010Yanhua,Li:Digraph} were first to
introduce the (correct) generalization of graph Laplacians for directed
networks/graphs ({\em digraphs}) using the stationary distribution 
$\{\pi_i\}$ of the transition matrix $P=D^{-1}A$ for the associated (random walk)
Markov chain defined on a directed network $G$. For a strongly
connected network $G$, its  {\em normalized} {\em digraph Laplacian} is
defined as $\tilde{L} =\Pi^{\frac{1}{2}} (I-P) \Pi^{-\frac{1}{2}}$, where $\Pi=diag[\pi_i]$ is the diagonal
matrix of stationary probabilities. Li and Zhang proved that
the hitting time and commute time can be computed from the Moore-Penrose pseudo-inverse $\tilde{L}^{+}$ of $\tilde{L}$ using the following relations:
	\begin{equation} 
	H_{i}^{\{j\}}=\frac{\tilde{L}^{+}_{jj}}{\pi_j} - \frac{\tilde{L}^{+}_{ij}}{\sqrt{\pi_i\pi_j}} \label{eq:H-diL}
	\end{equation}
and 
	\begin{equation} 
C_{ij}=H_{i}^{\{j\}}+ H_{j}^{\{i\}}=\frac{\tilde{L}^{+}_{ii}}{\pi_i}+\frac{\tilde{L}^{+}_{jj}}{\pi_j} - \frac{\tilde{L}^{+}_{ij}}{\sqrt{\pi_i\pi_j}} -\frac{\tilde{L}^{+}_{ji}}{\sqrt{\pi_i\pi_j}}.\label{eq:C-diL}
	\end{equation}

We define the ({\em unnormalized}) {\em digraph Laplacian} for a general
(directed or undirected) network $G$ as $L=\Pi(I-P)$ and the {\em
  random walk Laplacian} as $L^{p}=I-P$. Clearly, $\tilde{L}
=\Pi^{-\frac{1}{2}} L \Pi^{-\frac{1}{2}} =\Pi^{\frac{1}{2}}
L^{p}\Pi^{-\frac{1}{2}}$.  Note that for a (connected) undirected
graph, as $\pi_i=\frac{d_i} {vol(G)}$ where $vol(G)=\sum_j d_j$, we see that
the classical graph Laplacian $L^{u}=D-A =  vol(G)L$. Any results which hold
for $L$ also hold for $L^{u}=D-A$ with a scalar multiple. In the following we relate
the fundamental tensor to the digraph and random walk Laplacians $L$
and $L^{p}$, and use this relation to establish similar expressions for
computing hitting and commute times using $L$, analogous to
Eqs.(\ref{eq:H-diL}) and (\ref{eq:C-diL}).

\begin{lemma}[\cite{boley2011commute}] \label{lemma:boley}
	Let $ \left[
	\begin{array}{ c c }
	L_{11} & \boldsymbol{l}_{12} \\
	\boldsymbol{l}'_{21} & l_{nn}
	\end{array} \right]$ be an $n \times n$ irreducible matrix
      such that $nullity(L)=1$. Let $M=L^+$ be the Moore-Penrose pseudo-inverse of $L$
      partitioned similarly and $(\mathbf{u}',1)L=0$, $L(\mathbf{v};1)=0$, where
      $\mathbf{u}$ and $\mathbf{v}$ are $(n-1)$-dim column vectors, $\mathbf{u}'$ is
      the transpose of the column vector $\mathbf{u}$
      ($(\mathbf{u}',1)$ is a $n$-dim row vector and $(\mathbf{v};1)$
      is a $n$-dim column vector, {\em a la} MATLAB). Then the inverse of the $(n-1) \times (n-1)$ matrix $L_{11}$ exists and is given by:
	\begin{equation} 
	L_{11}^{-1}=(I+\mathbf{v}\mathbf{v}')M_{11}(I+\mathbf{u}\mathbf{u}'),
	\end{equation}
	where $I$ denotes the $(n-1)\times (n-1)$ identity matrix.
\end{lemma}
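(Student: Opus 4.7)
The plan is to use the two one-dimensional null-space conditions to produce a rank-revealing factorization of $L$, and then to extract $M_{11}$ from the standard full-rank formula for the Moore--Penrose pseudo-inverse of a product.

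First I would expand the two null equations in block form. From $L(\mathbf{v};1) = 0$ the top block gives $\boldsymbol{l}_{12} = -L_{11}\mathbf{v}$, and from $(\mathbf{u}',1)L = 0$ the left block gives $\boldsymbol{l}'_{21} = -\mathbf{u}'L_{11}$. The remaining scalar equation from either null condition then forces $l_{nn} = -\boldsymbol{l}'_{21}\mathbf{v} = \mathbf{u}'L_{11}\mathbf{v}$, so every block of $L$ is determined by $L_{11}$, $\mathbf{u}$, and $\mathbf{v}$. Assembling these identities yields the compact factorization
\begin{equation*}
L \;=\; \begin{pmatrix} L_{11} & -L_{11}\mathbf{v} \\ -\mathbf{u}'L_{11} & \mathbf{u}'L_{11}\mathbf{v} \end{pmatrix} \;=\; \begin{pmatrix} I \\ -\mathbf{u}' \end{pmatrix} L_{11} \begin{pmatrix} I & -\mathbf{v}\end{pmatrix} \;=:\; A\,L_{11}\,B.
\end{equation*}

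Next I would verify that $L_{11}$ is nonsingular. If $L_{11}\mathbf{x} = 0$, then $\boldsymbol{l}'_{21}\mathbf{x} = -\mathbf{u}'L_{11}\mathbf{x} = 0$, so the padded vector $(\mathbf{x};0)$ lies in $\mathrm{null}(L)$. Since $\mathrm{nullity}(L)=1$ and $\mathrm{null}(L) = \mathrm{span}\{(\mathbf{v};1)\}$ has nonzero last coordinate, the only vector of the form $(\mathbf{x};0)$ in the null space is zero, forcing $\mathbf{x} = 0$. Hence $L = AL_{11}B$ is a genuine full-rank factorization: $A$ has full column rank $n-1$, $B$ has full row rank $n-1$, and the middle factor is invertible. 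Applying the reverse-order law for pseudo-inverses of such products gives $L^{+} = B^{+}\,L_{11}^{-1}\,A^{+}$. Because $A'A = I + \mathbf{u}\mathbf{u}'$ and $BB' = I + \mathbf{v}\mathbf{v}'$, the closed forms $A^{+} = (A'A)^{-1}A'$ and $B^{+} = B'(BB')^{-1}$ produce
\begin{equation*}
M \;=\; L^{+} \;=\; \begin{pmatrix} I \\ -\mathbf{v}' \end{pmatrix}(I + \mathbf{v}\mathbf{v}')^{-1}\, L_{11}^{-1}\, (I + \mathbf{u}\mathbf{u}')^{-1}\begin{pmatrix} I & -\mathbf{u}\end{pmatrix}.
\end{equation*}

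Reading off the leading $(n-1)\times(n-1)$ block then gives $M_{11} = (I + \mathbf{v}\mathbf{v}')^{-1} L_{11}^{-1} (I + \mathbf{u}\mathbf{u}')^{-1}$, and multiplying on the left by $(I+\mathbf{v}\mathbf{v}')$ and on the right by $(I+\mathbf{u}\mathbf{u}')$ recovers the stated identity. The main obstacle I anticipate is justifying the reverse-order law for $L = (AL_{11})B$: in general $(CD)^{+} \neq D^{+}C^{+}$, but the identity does hold when the left factor has full column rank and the right factor has full row rank, which is exactly our situation. I would confirm this either by appealing to the Greville--Ben-Israel factorization theorem, or, more self-containedly, by checking that the displayed expression for $L^{+}$ satisfies the four Moore--Penrose axioms; this reduces to verifying that $AA^{+}$ and $B^{+}B$ are the orthogonal projectors onto $\mathrm{range}(A)$ and $\mathrm{range}(B')$, respectively, each a short calculation with the explicit formulas above.
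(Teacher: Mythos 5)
Your proof is correct. Note that the paper itself gives no proof of this lemma --- it is imported verbatim from the cited reference \cite{boley2011commute} --- so there is nothing internal to compare against; your argument (block-expanding the two null-vector conditions to get the full-rank factorization $L = A L_{11} B$ with $A=(I;-\mathbf{u}')$ and $B=(I\ \ {-\mathbf{v}})$, deducing invertibility of $L_{11}$ from $\mathrm{nullity}(L)=1$ together with the nonzero last coordinate of $(\mathbf{v};1)$, and then applying the reverse-order law for a full-column-rank/full-row-rank product) is the standard derivation of this identity and matches the one in the source. Your flagged concern about the reverse-order law is handled exactly as you suggest: for $L=CD$ with $C$ of full column rank and $D$ of full row rank one has $L^{+}=D'(DD')^{-1}(C'C)^{-1}C'$, which is verified directly from the four Moore--Penrose axioms, so no gap remains.
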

Note that node $n$ in the above lemma can be substituted by any other node (index).  

\begin{theorem} \label{thm:F_O3}
	The fundamental tensor can be computed from the Moore-Penrose
        pseudo-inverse of the {\em digraph Laplacian matrix}
        $L=\Pi(I-P)$ as well as the  {\em random
          walk} Laplacian matrix
        ${L^p}=I-P$ as follows, which results to $O(n^3)$ time complexity:
	\begin{equation} \label{N+}
	\boldsymbol{F}_{smt}=(L_{sm}^+ - L_{tm}^+ + L_{tt}^+ - L_{st}^+)\pi_m,
	\end{equation}
	\begin{equation} \label{Lp+}
	\boldsymbol{F}_{smt}={L^p}_{sm}^+ - {L^p}_{tm}^+ + \frac{\pi_m}{\pi_t}{L^p}_{tt}^+ - \frac{\pi_m}{\pi_t}{L^p}_{st}^+,
	\end{equation}
	where $\pi_i$ is the stationary probability of node $i$ and $\Pi$ is a diagonal matrix whose $i$-th diagonal entry is equal to $\pi_i$.
\end{theorem}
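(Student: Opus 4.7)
The plan is to derive both identities by invoking Lemma~\ref{lemma:boley} on the two Laplacians $L=\Pi(I-P)$ and $L^p=I-P$, after placing the target $t$ in the trailing block position so that the $(1,1)$-inverse of the Laplacian can be read off as the fundamental matrix $F^{\{t\}}$. Since $t$ is arbitrary, this yields every slice $\boldsymbol{F}_{\cdot\cdot t}$ of the tensor, and the $O(n^3)$ cost follows because a single pseudoinverse provides access to all $n$ slices at once.

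For Eq.~(\ref{N+}), I would first identify the null spaces of $L$. Both the left and right null vectors equal $\boldsymbol{1}$: the right case follows from $(I-P)\boldsymbol{1}=0$, while if $x^{\prime}L=0$ then $x^{\prime}\Pi$ must be proportional to $\pi^{\prime}$, forcing $x$ to be constant. Partitioning along row/column $t$ yields $L_{11}=\Pi_{11}(I-P_{11})$, so $F^{\{t\}}=L_{11}^{-1}\Pi_{11}$. Applying Lemma~\ref{lemma:boley} with $u=v=\boldsymbol{1}_{n-1}$ gives $L_{11}^{-1}=(I+\boldsymbol{1}\boldsymbol{1}^{\prime})M_{11}(I+\boldsymbol{1}\boldsymbol{1}^{\prime})$ where $M=L^{+}$. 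I would then expand this product entrywise and simplify using the pseudoinverse null-space identities $\boldsymbol{1}^{\prime}M=0$ and $M\boldsymbol{1}=0$ (which hold because $R(L^{+})=N(L)^{\perp}$ and $N(L^{+})=N(L^{\prime})$). These identities convert sums over $\{k\neq t\}$ into single off-block entries via $\sum_{k\neq t}M_{km}=-M_{tm}$, $\sum_{l\neq t}M_{sl}=-M_{st}$, and $\sum_{k,l\neq t}M_{kl}=M_{tt}$, and postmultiplying by $\Pi_{11}$ scales column $m$ by $\pi_m$, producing Eq.~(\ref{N+}).

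For Eq.~(\ref{Lp+}), the block $L^{p}_{11}=I-P_{11}$ is already $F^{\{t\}-1}$, so no stationary-distribution rescaling is needed on the outside. The right null vector of $L^{p}$ is still $\boldsymbol{1}$, but the left null vector is $\pi$, so when normalizing the last component to $1$ the lemma's auxiliary vector becomes $u_i=\pi_i/\pi_t$ (still $v=\boldsymbol{1}$). Repeating the expansion, the two pseudoinverse identities now read $\boldsymbol{1}^{\prime}M^{\prime}=0$ and $M^{\prime}\pi=0$ with $M^{\prime}=L^{p,+}$. The weighted row-sum identity cancels the $u_l=\pi_l/\pi_t$ factor appearing in each correction term, replacing $\sum_{l\neq t}M^{\prime}_{sl}u_l$ by $-M^{\prime}_{st}$ (carrying an outer factor $u_m=\pi_m/\pi_t$), and analogously for the double sum. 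Collecting the four pieces yields Eq.~(\ref{Lp+}).

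The main obstacle is the bookkeeping in the entrywise expansion of $(I+vv^{\prime})M_{11}(I+uu^{\prime})$: three correction sums must each be reduced to a single entry of the \emph{full} pseudoinverse, which requires carefully combining the null-space property with the fact that the sums omit index $t$ (precisely what surfaces the $M_{tt}$, $M_{tm}$, $M_{st}$ terms). Once that reduction is done, the two formulas drop out in parallel, and the complexity claim is immediate: one pseudoinverse costs $O(n^3)$ and evaluating each of the $n^3$ tensor entries from the closed form is $O(1)$.
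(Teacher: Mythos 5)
Your proof is correct and follows essentially the same route as the paper: both reduce $F^{\{t\}}$ to the $(1,1)$-block inverse of the Laplacian and invoke Lemma~\ref{lemma:boley}, then expand $(I+\mathbf{v}\mathbf{v}')M_{11}(I+\mathbf{u}\mathbf{u}')$ using the zero row/column sums of the pseudo-inverse. You actually supply details the paper's one-line proof glosses over — in particular that $F^{\{t\}}=L_{11}^{-1}\Pi_{11}$ (not $L_{11}^{-1}$) for $L=\Pi(I-P)$, and that for $L^p=I-P$ the left null vector is $\pi$ so the lemma must be applied with $u_i=\pi_i/\pi_t$ rather than $\mathbf{u}=\mathbf{1}$ — and these corrections are exactly what produce the $\pi_m$ and $\pi_m/\pi_t$ factors in Eqs.~(\ref{N+}) and~(\ref{Lp+}).
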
 
\begin{proof}
 Note that $F=(I-P_{11})^{-1}=L_{11}^{-1}$ as in
 Lemma~\ref{lemma:boley}.  The above equations follow from
 Lemma~\ref{lemma:boley} with $\boldsymbol{v}=\boldsymbol{u}=\boldsymbol{1}$.
 The nullity of matrix $L^{p}=I-P$ for a strongly connected network
 is 1. Using Eq.(\ref{N+}) or (\ref{Lp+}), all $n^3$ entries of the fundamental tensor $\boldsymbol{F}$ can be computed from $L^+$ in constant time each.
\end{proof}

\begin{corollary}  \label{cor:F}
	\begin{equation}
	\sum_{s,t} \boldsymbol{F}_{smt}=c\pi_m,
	\end{equation}
	where $c$ is a constant independent of $m$.
\end{corollary}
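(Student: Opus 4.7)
The plan is to apply Theorem~\ref{thm:F_O3} directly: substitute the closed-form expression for $\boldsymbol{F}_{smt}$ in terms of $L^+$, factor out $\pi_m$, and then use structural properties of $L = \Pi(I-P)$ to simplify the double sum. Specifically, starting from
\[
\sum_{s,t} \boldsymbol{F}_{smt} = \pi_m \sum_{s,t}\bigl(L^+_{sm} - L^+_{tm} + L^+_{tt} - L^+_{st}\bigr),
\]
the goal reduces to showing that the bracketed double sum is a constant independent of $m$.

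The key observation is that $\boldsymbol{1}$ lies in \emph{both} the right and the left null space of $L$. On the right, $L\boldsymbol{1} = \Pi(I-P)\boldsymbol{1} = 0$ because $P$ is row-stochastic. On the left, $\boldsymbol{1}'L = \boldsymbol{1}'\Pi(I-P) = \pi'(I-P) = 0$ because $\pi$ is the stationary distribution. For the Moore–Penrose pseudo-inverse this implies $L^+\boldsymbol{1} = 0$ and $\boldsymbol{1}'L^+ = 0$, since $\mathrm{null}(L^+) = \mathrm{null}(L')$ on the right and $\mathrm{null}((L^+)') = \mathrm{null}(L)$ on the left. Consequently every row sum and every column sum of $L^+$ vanishes.

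I would then evaluate the four terms individually. The sum $\sum_{s,t} L^+_{st}$ is the sum of all entries, which is zero by either of the identities above; $\sum_{s,t} L^+_{sm} = n \sum_s L^+_{sm} = 0$ (column sum); and similarly $\sum_{s,t} L^+_{tm} = 0$. Only $\sum_{s,t} L^+_{tt} = n \cdot \mathrm{tr}(L^+)$ survives. Putting this back together yields
\[
\sum_{s,t} \boldsymbol{F}_{smt} = \pi_m \cdot n\,\mathrm{tr}(L^+),
\]
so $c = n\,\mathrm{tr}(L^+)$, which is manifestly independent of $m$.

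I do not anticipate a real obstacle here. One point that might require a quick sanity check is that the formula from Theorem~\ref{thm:F_O3} is used over all $(s,t)$ pairs, including those where $s=t$ or $m=t$; but in those cases the bracketed expression collapses to zero, matching the definition in Eq.~(\ref{eq:fundamentalTensor}), so summing over the full index range is legitimate. The analogous computation using Eq.~(\ref{Lp+}) would work as well, using $L^p\boldsymbol{1} = 0$ and $\pi'L^p = 0$, though the bookkeeping is slightly cleaner with $L$.
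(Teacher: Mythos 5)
Your proposal is correct and follows essentially the same route as the paper: expand $\boldsymbol{F}_{smt}$ via Theorem~\ref{thm:F_O3}, kill three of the four terms using the vanishing column (and row) sums of $L^+$, and identify $c = n\sum_t L^+_{tt}$. Your extra justifications---deriving the zero row/column sums from $\boldsymbol{1}$ lying in both null spaces of $L$, and checking that the formula degenerates to zero when $s=t$ or $m=t$---are correct details the paper leaves implicit, not a different argument.
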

\begin{proof}
	\begin{eqnarray}  \label{eq:sumF=pi}
	\sum_{s,t} \boldsymbol{F}_{smt} &=& \sum_{s,t} (L_{sm}^+ - L_{tm}^+ - L_{st}^+ + L_{tt}^+)\pi_m  
	\\&=& 0 - 0 - 0 + (n \sum_t L_{tt}^+)\pi_m
	\\&=& c\pi_m,
	\end{eqnarray}
	where the second equality follows from the fact that the
        column sum of $L^+=(\Pi(I-P))^+$ is zero. Later in
        Section~\ref{sec:measureUnification}, we will show that
        $c=|\mathcal{E}|K$, where $K$ is the Kirchhoff index of a
        network. 
\end{proof}

\begin{corollary}
	Hitting time and commute time can also be expressed in terms
        of entries in the digraph Laplacian matrix $L=\Pi(I-P)$~\cite{Li:Digraph}:
	\begin{equation} \label{H+}
	H_i^{\{j\}}=\sum_m (L_{im}^+ - L_{jm}^+)\pi_m + L_{jj}^+ - L_{ij}^+,
	\end{equation}
	\begin{equation} \label{C+}
	C_{ij}=L_{ii}^+ + L_{jj}^+ - L_{ij}^+ - L_{ji}^+,
	\end{equation}
\end{corollary}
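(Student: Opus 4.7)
The plan is to derive both formulas directly from Theorem~\ref{thm:F_O3} by exploiting the matrix form of hitting time established earlier in Eq.(\ref{eq:H}). The key observation is that $H_i^{\{j\}} = \sum_m F^{\{j\}}_{im} = \sum_m \boldsymbol{F}_{imj}$, so once we plug the closed-form expression for $\boldsymbol{F}_{imj}$ from Eq.(\ref{N+}) into this sum, everything should reduce to simple algebra.

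First I would substitute the tensor formula into the hitting-time sum, obtaining
\[
H_i^{\{j\}} = \sum_m \bigl(L_{im}^+ - L_{jm}^+ + L_{jj}^+ - L_{ij}^+\bigr)\pi_m.
\]
The terms $L_{jj}^+$ and $L_{ij}^+$ are independent of $m$, so they can be pulled outside the sum; using $\sum_m \pi_m = 1$ (valid for a strongly connected network) then collapses the constant part to $L_{jj}^+ - L_{ij}^+$, giving exactly Eq.(\ref{H+}). As a quick sanity check, the $m=j$ term in the sum vanishes, consistent with the convention $\boldsymbol{F}_{ijj}=0$ from Eq.(\ref{eq:fundamentalTensor}).

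Next, for the commute time I would simply invoke the definition $C_{ij} = H_i^{\{j\}} + H_j^{\{i\}}$ and add the two instances of Eq.(\ref{H+}). The weighted sums $\sum_m (L_{im}^+ - L_{jm}^+)\pi_m$ and $\sum_m (L_{jm}^+ - L_{im}^+)\pi_m$ cancel exactly by antisymmetry in $i,j$, leaving $C_{ij} = L_{ii}^+ + L_{jj}^+ - L_{ij}^+ - L_{ji}^+$ as claimed in Eq.(\ref{C+}).

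There is no serious obstacle: the derivation is essentially a bookkeeping exercise once Theorem~\ref{thm:F_O3} is in hand. The only minor care is in noting that $\pi$ is a probability vector (so $\sum_m \pi_m = 1$) and that the summation range over $m$ includes $m=j$, where the tensor entry is zero by definition but the closed-form expression also yields zero, so the two representations are consistent and the algebra goes through without exception.
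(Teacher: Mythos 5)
Your derivation is correct and follows exactly the route the paper intends: its proof is the one-line instruction ``Use Eq.~(\ref{eq:H}) and (\ref{N+})'', which is precisely your substitution of the tensor formula into $H_i^{\{j\}}=\sum_m \boldsymbol{F}_{imj}$ followed by $\sum_m\pi_m=1$ and the cancellation in $C_{ij}=H_i^{\{j\}}+H_j^{\{i\}}$. Your added check that the $m=j$ term vanishes consistently in both representations is a nice touch but not a departure from the paper's argument.
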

\begin{proof}
	Use Eq.(\ref{eq:H}) and (\ref{N+}).
\end{proof}
Note that we can also write the metrics in terms of the random walk Laplacian matrix ${L^p}$ by a simple substitution: $L_{im}^+ - L_{ij}^+ = \frac{{L^p}_{im}^+}{\pi_m} - \frac{{L^p}_{ij}^+}{\pi_j}$.

\begin{corollary}
	Hitting cost $\U$ and commute cost $\D$ can be expressed  in terms of the digraph Laplacian matrix $L=\Pi(I-P)$:
	\begin{equation} \label{U+}
	\U_{ij}=\sum_m (L_{im}^+ - L_{jm}^+ + L_{jj}^+ - L_{ij}^+)g_m,
	\end{equation}
	\begin{equation} \label{Y+}
	\D_{ij}=(L_{im}^+ - L_{jm}^+ + L_{jj}^+ - L_{ij}^+)\sum_m g_m,
	\end{equation}
	where $g_m=r_m \pi_m$ and $r_m=\sum_{k\in \mathcal{N}_{out}(m)} p_{mk}w_{mk}$.
\end{corollary}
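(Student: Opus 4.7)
The plan is to derive both formulas by combining the matrix form of hitting cost from Eq.~(\ref{eq:B=Fr}) with the explicit expression for the fundamental tensor from Theorem~\ref{thm:F_O3}. Specifically, the relation $\bu^{\{t\}}=F^{\{t\}}\boldsymbol{r}$ can be written entrywise using the fundamental tensor as $\U_i^{\{j\}}=\sum_m \boldsymbol{F}_{imj}\,r_m$, so the strategy is simply to substitute Eq.~(\ref{N+}) into this sum.

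For the first equation, I would plug in $\boldsymbol{F}_{imj}=(L_{im}^+ - L_{jm}^+ + L_{jj}^+ - L_{ij}^+)\pi_m$ to obtain
\begin{equation*}
\U_i^{\{j\}}=\sum_m (L_{im}^+ - L_{jm}^+ + L_{jj}^+ - L_{ij}^+)\,r_m\pi_m,
\end{equation*}
and then absorb the product $r_m\pi_m$ into the defined quantity $g_m$. This step is essentially mechanical once Theorem~\ref{thm:F_O3} is available.

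For the commute cost formula, I would use $\D_{ij}=\U_i^{\{j\}}+\U_j^{\{i\}}$ and add the two expressions obtained above, swapping the roles of $i$ and $j$ in the second term. The cross terms $L_{im}^+$ and $L_{jm}^+$ cancel pairwise inside the sum over $m$, leaving only the $m$-independent combination $L_{ii}^+ + L_{jj}^+ - L_{ij}^+ - L_{ji}^+$, which can then be factored out of $\sum_m g_m$. This mirrors exactly how the commute-time formula (\ref{C+}) collapses out of the hitting-time formula (\ref{H+}), and makes the symmetry $\D_{ij}=\D_{ji}$ manifest. (I note that the statement as written appears to have a typographical conflation of the $L^+$ coefficient; the cancellation argument naturally yields $(L_{ii}^+ + L_{jj}^+ - L_{ij}^+ - L_{ji}^+)\sum_m g_m$.)

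There is no genuine obstacle here: the only nontrivial ingredient is Theorem~\ref{thm:F_O3}, which has already been established, and everything else is an index manipulation. The main thing to be careful about is bookkeeping of the subscripts when specializing the tensor index $t=j$ (resp.\ $t=i$) in the second aggregation, and verifying that cross-terms telescope in the commute-cost derivation.
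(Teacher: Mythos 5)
Your proof is correct and follows exactly the paper's own route: substitute Eq.~(\ref{N+}) into Eq.~(\ref{eq:B=Fr}) to get the hitting-cost formula, then add the two hitting costs with $i$ and $j$ swapped so the $m$-dependent cross terms cancel and $\sum_m g_m$ factors out. Your observation that the coefficient in Eq.~(\ref{Y+}) should read $L_{ii}^+ + L_{jj}^+ - L_{ij}^+ - L_{ji}^+$ (so that $\D_{ij}=C_{ij}\sum_m g_m$) is also right; the dangling index $m$ in the stated formula is a typographical slip, and your corrected form is precisely what makes the paper's own remark --- that commute cost is a scalar multiple of commute time --- come out.
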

\begin{proof}
	Use Eq.(\ref{eq:B=Fr}) and (\ref{N+}). From Eq.(\ref{C+})
        and (\ref{Y+}), it is also interesting to note that commute cost is a multiple scalar of commute time.
\end{proof}

\begin{lemma}[\cite{boley2011commute}] \label{lemma:boley2}
	Let $C$ be an $n\times n$ non-singular matrix and suppose $A=C-\boldsymbol{uv}'$ is singular. Then the Moore-Penrose pseudo-inverse of $A$ is given as:
	\begin{equation}
	A^+=(I-\frac{\boldsymbol{xx'}}{\boldsymbol{x'x}})C^{-1}(I-\frac{\boldsymbol{yy'}}{\boldsymbol{y'y}}),
	\end{equation}
	where $\boldsymbol{x}=C^{-1}\boldsymbol{u}$, $\boldsymbol{y'}=\boldsymbol{v'}C^{-1}$.
\end{lemma}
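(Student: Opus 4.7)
The plan is to verify $B := \bigl(I - \tfrac{\boldsymbol{xx'}}{\boldsymbol{x'x}}\bigr) C^{-1} \bigl(I - \tfrac{\boldsymbol{yy'}}{\boldsymbol{y'y}}\bigr)$ directly against the four Moore--Penrose conditions
\[
ABA = A, \quad BAB = B, \quad (AB)' = AB, \quad (BA)' = BA,
\]
which uniquely characterize $A^+$. The leverage for doing so comes entirely from identifying $\boldsymbol{x}$ and $\boldsymbol{y}$ as generators of the right and left null spaces of $A$.

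First I would establish the structural fact that $\boldsymbol{x}$ spans $\ker A$ and $\boldsymbol{y}$ spans $\ker A'$. Writing $A = C(I - C^{-1}\boldsymbol{u}\boldsymbol{v}') = C(I - \boldsymbol{x}\boldsymbol{v}')$ and using nonsingularity of $C$, the singularity hypothesis forces $\boldsymbol{v}'\boldsymbol{x} = \boldsymbol{v}'C^{-1}\boldsymbol{u} = 1$ (otherwise Sherman--Morrison gives an inverse). Once this scalar identity is in hand, a one-line computation yields $A\boldsymbol{x} = \boldsymbol{u}(1 - \boldsymbol{v}'\boldsymbol{x}) = 0$ and symmetrically $\boldsymbol{y}'A = \boldsymbol{v}'(1 - \boldsymbol{y}'\boldsymbol{u}) = 0$. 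The rank of $A$ drops by exactly one under this rank-one perturbation of $C$, so these null spaces are one-dimensional, and therefore the orthogonal projectors onto $(\ker A)^\perp$ and $(\ker A')^\perp$ are precisely
\[
P_x := I - \tfrac{\boldsymbol{xx'}}{\boldsymbol{x'x}}, \qquad P_y := I - \tfrac{\boldsymbol{yy'}}{\boldsymbol{y'y}}.
\]

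Next I would compute $AB$ and $BA$ using the two ``absorption'' identities $AP_x = A$ (since $A\boldsymbol{x}=0$) and $P_y A = A$ (since $\boldsymbol{y}'A=0$), together with $P_x \boldsymbol{x} = 0$ and $\boldsymbol{y}' P_y = 0$. Concretely,
\[
AB = A P_x C^{-1} P_y = A C^{-1} P_y = (I - \boldsymbol{u}\boldsymbol{y}')P_y + (\text{terms that vanish}) = P_y,
\]
and by the dual calculation $BA = P_x$. Because $P_x$ and $P_y$ are symmetric idempotents, conditions (3) and (4) are immediate, condition (1) reduces to $ABA = P_y A = A$, and condition (2) reduces to $BAB = P_x \cdot P_x C^{-1}P_y = P_x C^{-1} P_y = B$. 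All four Moore--Penrose axioms follow, so $B = A^+$.

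I expect the main subtlety to be the bookkeeping in the $AB$ calculation, specifically confirming that the cross term $\boldsymbol{u}\boldsymbol{v}'C^{-1}P_y = \boldsymbol{u}\boldsymbol{y}'P_y$ vanishes by the orthogonality $\boldsymbol{y}'P_y = 0$; everything else is essentially a consequence of recognizing $P_x$ and $P_y$ as the canonical projectors associated with the nullspaces of $A$. A brief preliminary sentence justifying that $\text{rank}(A) = n-1$ (so that these one-dimensional nullspaces genuinely determine the projectors) is the only point outside the direct verification that warrants explicit attention.
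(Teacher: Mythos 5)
Your proof is correct. Note that the paper itself offers no argument for this lemma---it is imported verbatim from \cite{boley2011commute}---so there is no in-text proof to compare against; your direct verification of the four Moore--Penrose axioms is the standard route and is complete. The facts you isolate are exactly the right ones: writing $A=C(I-\boldsymbol{x}\boldsymbol{v}')$ with $C$ nonsingular, singularity forces $\boldsymbol{v}'\boldsymbol{x}=1$, whence $A\boldsymbol{x}=0$ and $\boldsymbol{y}'A=0$; then $AB=P_y$ and $BA=P_x$ follow from $AP_x=A$, $P_yA=A$, $P_x\boldsymbol{x}=0$ and $\boldsymbol{y}'P_y=0$, and the four axioms drop out since $P_x,P_y$ are symmetric idempotents. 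Two small points are worth making explicit in a final write-up: (i) $\boldsymbol{u}\neq 0$ and $\boldsymbol{v}\neq 0$ (otherwise $A=C$ would be nonsingular), so $\boldsymbol{x}'\boldsymbol{x}$ and $\boldsymbol{y}'\boldsymbol{y}$ are nonzero and the two projectors are well defined; and (ii) the rank-$(n-1)$ discussion is motivation only---the verification uses nothing beyond $A\boldsymbol{x}=0$ and $\boldsymbol{y}'A=0$, so you need not dwell on it. The ``terms that vanish'' in your $AB$ computation are precisely $-\boldsymbol{u}(\boldsymbol{y}'P_y)=0$, as you anticipate, since $AC^{-1}P_y=(I-\boldsymbol{u}\boldsymbol{y}')P_y=P_y-\boldsymbol{u}(\boldsymbol{y}'P_y)$.
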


\begin{theorem} \label{thm:Z}
	For an ergodic Markov chain, the Moore-Penrose pseudo-inverse of random-walk Laplacian ${{L^p}}^+$ can be computed from fundamental matrix $Z=(I-P+\boldmath{1}\boldsymbol{\pi'})^{-1}$ \cite{grinstead2012introduction} as follows:
	\begin{equation} \label{eq:LpZ}
	{L^p}^+=(I-\frac{Z\boldsymbol{1}\boldsymbol{1'}Z'}{\boldsymbol{1'}Z'Z\boldsymbol{1}})Z(I-\frac{Z'\boldsymbol{\pi}\boldsymbol{\pi'}Z}{\boldsymbol{\pi'}ZZ'\boldsymbol{\pi}}),
	\end{equation}
	where $\boldsymbol{1}$ is a vector of all 1's and $\boldsymbol{\pi}$ denotes the vector of stationary probabilities.
\end{theorem}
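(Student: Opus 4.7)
The plan is to obtain this as an essentially immediate corollary of Lemma~\ref{lemma:boley2}. The key observation is that $L^p = I - P$ can be written as a rank-one perturbation of $Z^{-1}$:
\begin{equation*}
L^p \;=\; (I - P + \boldsymbol{1}\boldsymbol{\pi}') - \boldsymbol{1}\boldsymbol{\pi}' \;=\; Z^{-1} - \boldsymbol{1}\boldsymbol{\pi}'.
\end{equation*}
This places $L^p$ in exactly the form $A = C - \boldsymbol{u}\boldsymbol{v}'$ required by Lemma~\ref{lemma:boley2}, with $C := Z^{-1}$, $\boldsymbol{u} := \boldsymbol{1}$, and $\boldsymbol{v} := \boldsymbol{\pi}$.

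Before invoking the lemma, I would verify its hypotheses. The matrix $C = Z^{-1}$ is nonsingular by the very definition of $Z$ for an ergodic chain (see \cite{grinstead2012introduction}), and $A = L^p$ is singular because $P$ is row-stochastic, so $P\boldsymbol{1} = \boldsymbol{1}$ and hence $L^p\boldsymbol{1} = 0$. Once both hypotheses are in place, I substitute into the lemma. Since $C^{-1} = Z$, the auxiliary vectors become $\boldsymbol{x} = C^{-1}\boldsymbol{u} = Z\boldsymbol{1}$ and $\boldsymbol{y}' = \boldsymbol{v}'C^{-1} = \boldsymbol{\pi}'Z$, so $\boldsymbol{y} = Z'\boldsymbol{\pi}$. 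Plugging these into $A^+ = (I - \boldsymbol{x}\boldsymbol{x}'/(\boldsymbol{x}'\boldsymbol{x}))\,C^{-1}\,(I - \boldsymbol{y}\boldsymbol{y}'/(\boldsymbol{y}'\boldsymbol{y}))$ reproduces Eq.~(\ref{eq:LpZ}) term for term, so the proof reduces to this bookkeeping substitution.

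The only subtle point, and the one I would flag as the main (minor) obstacle, is that the denominators $\boldsymbol{1}'Z'Z\boldsymbol{1}$ and $\boldsymbol{\pi}'ZZ'\boldsymbol{\pi}$ in Eq.~(\ref{eq:LpZ}) must be nonzero for the formula to be well-defined, i.e.\ we need $Z\boldsymbol{1} \neq 0$ and $Z'\boldsymbol{\pi} \neq 0$. These are easy to pin down explicitly: because $P\boldsymbol{1} = \boldsymbol{1}$ and $\boldsymbol{\pi}'\boldsymbol{1} = 1$, one has $(I - P + \boldsymbol{1}\boldsymbol{\pi}')\boldsymbol{1} = \boldsymbol{1}$, so $Z\boldsymbol{1} = \boldsymbol{1}$; by the dual computation using $\boldsymbol{\pi}'P = \boldsymbol{\pi}'$, one gets $\boldsymbol{\pi}'Z = \boldsymbol{\pi}'$. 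Hence $\boldsymbol{1}'Z'Z\boldsymbol{1} = n$ and $\boldsymbol{\pi}'ZZ'\boldsymbol{\pi} = \boldsymbol{\pi}'\boldsymbol{\pi} > 0$, so the application of Lemma~\ref{lemma:boley2} is unambiguous and the theorem follows.
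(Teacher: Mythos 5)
Your proposal is correct and follows exactly the paper's route: the paper likewise proves the theorem as a direct application of Lemma~\ref{lemma:boley2} via the decomposition $L^p = Z^{-1} - \boldsymbol{1}\boldsymbol{\pi}'$. Your explicit verification that $Z\boldsymbol{1}=\boldsymbol{1}$ and $\boldsymbol{\pi}'Z=\boldsymbol{\pi}'$ (so the denominators are $n$ and $\boldsymbol{\pi}'\boldsymbol{\pi}>0$) is a welcome detail the paper leaves implicit.
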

\begin{proof}
	The theorem is a direct result of applying Lemma \ref{lemma:boley2}.
\end{proof}
Theorem \ref{thm:Z} along with Theorem \ref{thm:F_O3} reveal the relation between the fundamental matrices $F$ and $Z$. They also show that the fundamental tensor $F$ can be computed by a single matrix inverse, can it be either a Moore-Penrose pseudo-inverse or a regular matrix inverse, as ${L^p}^+$ in Eq.~(\ref{Lp+}) can be computed by either operating the pseudo-inverse on $L^p$ or using Eq.~(\ref{eq:LpZ}). 
Discussion on computing Markov metrics via the group inverse can be found in \cite{meyer1975role,kirkland2012group}.
 
\begin{theorem}[Incremental Computation of the Fundamental Matrix] \label{Nlemma1}
	The fundamental matrix for  target set  $\mathcal{S}_1\cup
        \mathcal{S}_2$ can be computed from the fundamental matrix for
        target set $\mathcal{S}_1$ as follows,
	\begin{equation} 
	F^{\mathcal{S}_1\cup \mathcal{S}_2}_{im}=F^{\mathcal{S}_1}_{im}-F^{\mathcal{S}_1}_{i\mathcal{S}_2}[{F^{\mathcal{S}_1}_{\mathcal{S}_2\mathcal{S}_2}}]^{-1}F^{\mathcal{S}_1}_{\mathcal{S}_2m},
	\end{equation}
	where  $F^{\mathcal{S}_1}_{i\mathcal{S}_2}$ denotes the row
        corresponding to node $i$ and the columns corresponding to set
        $\mathcal{S}_2$ of the fundamental matrix $F^{\mathcal{S}_1}$,
        and the (sub-)matrices
        ${F^{\mathcal{S}_1}_{\mathcal{S}_2\mathcal{S}_2}}$ and
        $F^{\mathcal{S}_1}_{\mathcal{S}_2m}$ are similarly defined.
\end{theorem}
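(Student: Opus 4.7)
My plan is to recognize this identity as an instance of the classical Schur-complement formula for the top-left block of the inverse of a $2\times 2$ block matrix. Concretely, let $\mathcal{T}_1=\mathcal{V}\setminus\mathcal{S}_1$ and $\mathcal{T}=\mathcal{V}\setminus(\mathcal{S}_1\cup\mathcal{S}_2)$, and order the nodes of $\mathcal{T}_1$ so that the rows/columns indexed by $\mathcal{T}$ come first, followed by those indexed by $\mathcal{S}_2$. Then
\begin{equation*}
(F^{\mathcal{S}_1})^{-1} \;=\; I-P_{\mathcal{T}_1\mathcal{T}_1} \;=\; \begin{pmatrix} I-P_{\mathcal{T}\mathcal{T}} & -P_{\mathcal{T}\mathcal{S}_2} \\ -P_{\mathcal{S}_2\mathcal{T}} & I-P_{\mathcal{S}_2\mathcal{S}_2} \end{pmatrix},
\end{equation*}
and by definition $F^{\mathcal{S}_1\cup\mathcal{S}_2}=(I-P_{\mathcal{T}\mathcal{T}})^{-1}$ is exactly the inverse of the top-left block of this matrix.

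The key identity I will invoke is the following dual of the Schur-complement formula: for any invertible block matrix $M=\bigl(\begin{smallmatrix}A&B\\C&D\end{smallmatrix}\bigr)$ with inverse $M^{-1}=\bigl(\begin{smallmatrix}P&Q\\R&S\end{smallmatrix}\bigr)$, if the block $A$ itself is invertible then $A^{-1}=P-QS^{-1}R$. (This is simply the statement that the Schur complement of $S$ in $M^{-1}$ equals the inverse of the corresponding block of $M$, and it follows either directly from the standard block-inverse formula or by multiplying out $M\cdot M^{-1}=I$ and eliminating the $\mathcal{S}_2$-variables.) Applying this identity with $M=(F^{\mathcal{S}_1})^{-1}$, so that the blocks $P,Q,R,S$ of $M^{-1}$ are precisely $F^{\mathcal{S}_1}_{\mathcal{T}\mathcal{T}}$, $F^{\mathcal{S}_1}_{\mathcal{T}\mathcal{S}_2}$, $F^{\mathcal{S}_1}_{\mathcal{S}_2\mathcal{T}}$, $F^{\mathcal{S}_1}_{\mathcal{S}_2\mathcal{S}_2}$, yields
\begin{equation*}
F^{\mathcal{S}_1\cup\mathcal{S}_2}_{\mathcal{T}\mathcal{T}} \;=\; F^{\mathcal{S}_1}_{\mathcal{T}\mathcal{T}} - F^{\mathcal{S}_1}_{\mathcal{T}\mathcal{S}_2}\bigl[F^{\mathcal{S}_1}_{\mathcal{S}_2\mathcal{S}_2}\bigr]^{-1}F^{\mathcal{S}_1}_{\mathcal{S}_2\mathcal{T}},
\end{equation*}
and reading off the $(i,m)$ entry (with $i,m\in\mathcal{T}$) gives exactly the claimed formula.

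The only non-trivial obligation is to justify that the two inverses appearing in the identity actually exist. The invertibility of $(F^{\mathcal{S}_1})^{-1}$ is immediate since $F^{\mathcal{S}_1}$ itself is the inverse of $I-P_{\mathcal{T}_1\mathcal{T}_1}$ (which exists under the hypothesis from Section~\ref{sec:target_set} that $\mathcal{S}_1$ contains a node from every recurrent class). The block $F^{\mathcal{S}_1}_{\mathcal{S}_2\mathcal{S}_2}$ is invertible because its diagonal entries are expected numbers of visits of the form $1+\sum_{k\ge 1}[P^k_{\mathcal{T}_1\mathcal{T}_1}]_{ss}\ge 1$, and more formally because this block is itself (up to reordering) the inverse of $I-P_{\mathcal{T}_1\mathcal{T}_1}$ restricted to the $\mathcal{S}_2$ coordinates after taking the Schur complement against the $\mathcal{T}$ block — equivalently, it is the fundamental matrix for the chain obtained by making $\mathcal{T}$ transient and $\mathcal{S}_1$ absorbing and restricting to $\mathcal{S}_2$, so it is non-singular. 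This is the main technical step to verify carefully; once it is in place, the proof reduces to the one-line block-inverse identity above.

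As a sanity check (and alternative proof route if desired), I would also observe that the formula admits a direct probabilistic reading via the strong Markov property: starting from $i$, the expected visits to $m$ before hitting $\mathcal{S}_1$ decompose into visits before first hitting $\mathcal{S}_1\cup\mathcal{S}_2$ plus visits that occur after first entering $\mathcal{S}_2$; conditioning on the first entry state into $\mathcal{S}_2$ and using that $F^{\mathcal{S}_1}_{i,s}$ factorizes through this first-entry probability times $F^{\mathcal{S}_1}_{\mathcal{S}_2\mathcal{S}_2}$ recovers the same correction term $F^{\mathcal{S}_1}_{i\mathcal{S}_2}[F^{\mathcal{S}_1}_{\mathcal{S}_2\mathcal{S}_2}]^{-1}F^{\mathcal{S}_1}_{\mathcal{S}_2m}$, providing an independent confirmation.
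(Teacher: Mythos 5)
Your proposal is correct and follows essentially the same route as the paper's own proof: both identify $F^{\mathcal{S}_1\cup\mathcal{S}_2}$ as the inverse of the top-left block of $(F^{\mathcal{S}_1})^{-1}=I-P_{\mathcal{T}_1\mathcal{T}_1}$ and then apply the Schur-complement identity $A^{-1}=X-YW^{-1}Z$ relating a block of a matrix to the blocks of its inverse. Your explicit justification that $F^{\mathcal{S}_1}_{\mathcal{S}_2\mathcal{S}_2}$ is non-singular is a welcome addition the paper leaves implicit, but it does not change the argument.
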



\begin{proof}
	Consider the  matrix $M=I-P_{\mathcal{TT}}$,  where the
        absorbing set is $\mathcal{A}=\mathcal{S}_1$ and the transient
        set $\mathcal{T}=V\setminus \mathcal{S}_1$. The inverse of $M$
        yields the fundamental matrix $F^{\mathcal{S}_1}$, and the
        inverse of its sub-matrix obtained from removing rows and
        columns corresponding to set $\mathcal{S}_2$ yields the
        fundamental matrix $F^{\mathcal{S}_1\cup
          \mathcal{S}_2}$. Using the
        following equations from the Schur complement,  we see that the inverse of  a
        sub-matrix can be derived 
        from that of the original matrix.
	
 If $A$ is invertible, we can factor the matrix $M=\left[
	\begin{array}{ c c }
	A & B \\
	C & D
	\end{array} \right]$ as follows
	\begin{equation}
	\left[
	\begin{array}{ c c }
	A & B \\
	C & D
	\end{array} \right] =
	\left[
	\begin{array}{ c c }
	I & 0 \\
	CA^{-1} & I
	\end{array} \right]
	\left[
	\begin{array}{ c c }
	A & B \\
	0 & D-CA^{-1}B
	\end{array} \right]
	\end{equation}
	Inverting both sides of the equation yields
	\begin{eqnarray}
	\left[
	\begin{array}{ c c }
	A & B \\
	C & D
	\end{array} \right]^{-1} &=&
	\left[
	\begin{array}{ c c }
	A^{-1} & -A^{-1}BS^{-1} \\
	0 & S^{-1}
	\end{array} \right]
	\left[
	\begin{array}{ c c }
	I & 0 \\
	-CA^{-1} & I
	\end{array} \right]
	\\&=& 
	\left[
	\begin{array}{ c c }
	A^{-1}+A^{-1}BS^{-1}CA^{-1} & -A^{-1}BS^{-1} \\
	-S^{-1}CA^{-1} & S^{-1}
	\end{array} \right]
	\\&=& 
	\left[
	\begin{array}{ c c }
	X & Y \\
	Z & W
	\end{array} \right],
	\end{eqnarray}
	where $S=D-CA^{-1}B$. Therefore, $A^{-1}$ can be computed from $A^{-1}=X-YW^{-1}Z$.
\end{proof}

\begin{corollary}
	The simplified form of Theorem~\ref{Nlemma1} for a single
        target is given by
	\begin{equation} \label{Nlemma1_simple}
	F^{\{j,k\}}_{im}=F^{\{j\}}_{im}-\frac{F^{\{j\}}_{ik}F^{\{j\}}_{k,m}}{F^{\{j\}}_{k,k}}
	\end{equation}
\end{corollary}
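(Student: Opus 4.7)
The plan is to obtain this corollary as a direct specialization of Theorem~\ref{Nlemma1}, choosing the two absorbing sets to be singletons. Specifically, I take $\mathcal{S}_1=\{j\}$ and $\mathcal{S}_2=\{k\}$ with $j\neq k$, and consider indices $i,m\in\mathcal{V}\setminus\{j,k\}$ so that both sides of the claimed identity are defined. With these choices every block appearing in the general update formula collapses to a scalar: the row block $F^{\{j\}}_{i\mathcal{S}_2}$ reduces to the single entry $F^{\{j\}}_{ik}$, the column block $F^{\{j\}}_{\mathcal{S}_2 m}$ reduces to $F^{\{j\}}_{km}$, and the principal submatrix $F^{\{j\}}_{\mathcal{S}_2\mathcal{S}_2}$ becomes the scalar $F^{\{j\}}_{kk}$, whose matrix inverse is just $1/F^{\{j\}}_{kk}$.

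Substituting these four scalars into the block identity
\[
F^{\mathcal{S}_1\cup\mathcal{S}_2}_{im}=F^{\mathcal{S}_1}_{im}-F^{\mathcal{S}_1}_{i\mathcal{S}_2}\bigl[F^{\mathcal{S}_1}_{\mathcal{S}_2\mathcal{S}_2}\bigr]^{-1}F^{\mathcal{S}_1}_{\mathcal{S}_2 m}
\]
and recognizing $\mathcal{S}_1\cup\mathcal{S}_2=\{j,k\}$ immediately yields the stated formula \eqref{Nlemma1_simple}. No new algebra is required beyond substitution; the Schur-complement bookkeeping already performed in the proof of Theorem~\ref{Nlemma1} does all the work.

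The only point that requires a brief justification is that the scalar divisor $F^{\{j\}}_{kk}$ is nonzero, so that $[F^{\{j\}}_{\mathcal{S}_2\mathcal{S}_2}]^{-1}$ genuinely exists in this $1\times 1$ case. This is the step I would single out as the main (though very mild) obstacle: it follows from the probabilistic interpretation of the fundamental matrix, since $F^{\{j\}}_{kk}=\mathbb{E}_k[\nu_k]$ counts the expected number of visits at $k$ of a walk started at $k$ before absorption at $j$, and this is at least $1$ due to the initial visit $X_0=k$. Equivalently, the $(n-1)\times(n-1)$ matrix $I-P_{11}$ is an M-matrix whose inverse $F^{\{j\}}$ has strictly positive diagonal entries, so the specialization is always well defined, completing the proof.
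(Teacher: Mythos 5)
Your proposal is correct and is essentially the paper's own (implicit) argument: the corollary is obtained by specializing Theorem~\ref{Nlemma1} to $\mathcal{S}_1=\{j\}$, $\mathcal{S}_2=\{k\}$, so that all blocks collapse to scalars. Your added remark that $F^{\{j\}}_{kk}\geq 1$ (hence the scalar inverse exists) is a small but welcome justification that the paper leaves unstated.
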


\begin{lemma} \label{Nlemma2general}
	\begin{equation} 
	P_{\mathcal{T}\mathcal{T}}F^{\mathcal{A}}=F^{\mathcal{A}}P_{\mathcal{T}\mathcal{T}}=F^{\mathcal{A}},  
	\end{equation}
	where $\mathcal{T}\cup \mathcal{A}=\mathcal{V}$
\end{lemma}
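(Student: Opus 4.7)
The plan is to use the defining relation $F^{\mathcal{A}} = (I-P_{\mathcal{TT}})^{-1}$ together with its Neumann series expansion and establish the two chained equalities in order. First I would prove the middle commutativity $P_{\mathcal{TT}} F^{\mathcal{A}} = F^{\mathcal{A}} P_{\mathcal{TT}}$, which is immediate from the fact that $F^{\mathcal{A}}$ is a convergent power series in $P_{\mathcal{TT}}$: expanding $F^{\mathcal{A}} = \sum_{k \ge 0} P_{\mathcal{TT}}^k$ and multiplying on either side by $P_{\mathcal{TT}}$ yields $\sum_{k \ge 1} P_{\mathcal{TT}}^k$ in both cases, so any matrix of this form commutes with $P_{\mathcal{TT}}$.

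Next I would address the substantive equality $P_{\mathcal{TT}} F^{\mathcal{A}} = F^{\mathcal{A}}$. The natural algebraic starting point is the defining identity $(I-P_{\mathcal{TT}}) F^{\mathcal{A}} = I$, which rearranges to $P_{\mathcal{TT}} F^{\mathcal{A}} = F^{\mathcal{A}} - I$. To reach the statement as written, I would examine the precise scope in which the asserted identity is intended to hold: specifically (i) whether the identity is being read off-diagonally, so that the residual $\mathbf{1}_{\{s=m\}}$ contribution vanishes entry-by-entry for $s \neq m$; (ii) whether a projector or restriction to a sub-block implicit in the notation masks the diagonal contribution; or (iii) whether the identity is to be composed with a subsequent operator (such as left-multiplication by a quantity appearing in the following section) that annihilates $I$.

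A complementary cross-check comes from the stochastic/recursive form of Section~\ref{sec:definition}: the recursion $F^{\mathcal{A}}_{sm} = \mathbf{1}_{\{s=m\}} + (P_{\mathcal{TT}}F^{\mathcal{A}})_{sm}$ decomposes the expected visits to $m$ from $s$ into the initial self-visit plus the expected visits after one step. The asserted equality amounts to the claim that the initial-visit term is redundant under the intended interpretation, which I would try to justify either by conditioning on the post-first-step trajectory and invoking the strong Markov property at the target set $\mathcal{A}$, or by appealing to the block structure of the partition so that the diagonal entry of $F^{\mathcal{A}}$ is absorbed elsewhere.

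The main obstacle is clearly step two: the straightforward algebraic manipulation produces $F^{\mathcal{A}} - I$, so the crux of the proof lies in pinning down the reading under which the residual identity term disappears. I expect the resolution to hinge on a careful bookkeeping of indices (off-diagonal vs.\ full block) and on how this lemma is deployed in the sequel; once the correct reading is nailed down the verification itself should follow directly from the Neumann expansion and the commutativity established in step one.
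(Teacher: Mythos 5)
Your diagnosis is correct, and the obstacle you identify in step two is not a gap in your argument but an error in the statement as printed. Since $F^{\mathcal{A}}=(I-P_{\mathcal{TT}})^{-1}$ is invertible, the asserted identity $P_{\mathcal{TT}}F^{\mathcal{A}}=F^{\mathcal{A}}$ would force $P_{\mathcal{TT}}=I$, which is absurd; the correct statement is
\begin{equation*}
P_{\mathcal{TT}}F^{\mathcal{A}}=F^{\mathcal{A}}P_{\mathcal{TT}}=F^{\mathcal{A}}-I,
\end{equation*}
which is exactly what your manipulation of $(I-P_{\mathcal{TT}})F^{\mathcal{A}}=I$ produces, and your Neumann-series argument for the commutativity $P_{\mathcal{TT}}F^{\mathcal{A}}=F^{\mathcal{A}}P_{\mathcal{TT}}$ is sound. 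The paper's own proof is only the one-line remark that the lemma ``follows easily from Eq.~(\ref{eq:N})'' and never accounts for the missing $-I$; but every downstream use of the lemma confirms the corrected version: the corollary immediately following it (Eq.~(\ref{Nlemma2})) carries the extra $+1$ on the diagonal, i.e.\ $F=I+FP_{\mathcal{TT}}$, and the proof of Theorem~\ref{Nlemma3} invokes the lemma precisely in the form $\sum_{m\in\mathcal{T}}F^{\mathcal{S}}_{jm}p_{mj}=F^{\mathcal{S}}_{jj}-1$. So none of your three candidate readings (off-diagonal restriction, hidden projector, composition with an annihilating operator) is needed; you should simply state and prove the lemma with the $-I$ present. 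Your cross-check via the recursive form $F_{sm}=\mathbf{1}_{\{s=m\}}+(P_{\mathcal{TT}}F)_{sm}$ is the right probabilistic reading of the same identity and can stand as written.
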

\begin{proof}
	It follows easily from Eq.(\ref{eq:N}).
\end{proof}

\begin{corollary}[Another Recursive Form for the Fundamental Matrix]
	\begin{equation} \label{Nlemma2}
	F^{\{j\}}_{im}=
	\begin{cases}
	\sum_{k\in\mathcal{N}_{in}(m)} F^{\{j\}}_{ik}p_{km}  & \text{if } i\neq m \\
	1+\sum_{k\in\mathcal{N}_{in}(m)} F^{\{j\}}_{ik}p_{km}  & \text{if } i= m 
	\end{cases}
	\end{equation}	  
\end{corollary}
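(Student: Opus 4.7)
The plan is to read the recursion directly off of the defining identity $F^{\{j\}}(I-P_{11}) = I$ for the fundamental matrix (Eq.~(\ref{eq:N})), equivalently
\begin{equation*}
F^{\{j\}} P_{\mathcal{T}\mathcal{T}} = F^{\{j\}} - I,
\end{equation*}
where $\mathcal{T}=\mathcal{V}\setminus\{j\}$ is the transient set. This is essentially the content of Lemma~\ref{Nlemma2general}, once one observes that $F P_{\mathcal{T}\mathcal{T}}$ equals $F-I$ (the $-I$ is what accounts for the diagonal correction in the $i=m$ case below).

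First I would take the $(i,m)$-entry of both sides, obtaining $\sum_{k\in\mathcal{T}} F^{\{j\}}_{ik}\, p_{km} = F^{\{j\}}_{im} - \delta_{im}$. Next I would rewrite the summation range as $\mathcal{N}_{in}(m)$: terms with $k\notin\mathcal{N}_{in}(m)$ vanish because $p_{km}=0$, and if $j$ happens to be an in-neighbor of $m$, that term is absent from the $\mathcal{T}$-sum but can be adjoined as a zero summand under the convention $F^{\{j\}}_{ij}=0$ from Eq.~(\ref{eq:fundamentalTensor}). Hence
\begin{equation*}
\sum_{k\in\mathcal{N}_{in}(m)} F^{\{j\}}_{ik}\, p_{km} = F^{\{j\}}_{im} - \delta_{im}.
\end{equation*}

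Finally, solving for $F^{\{j\}}_{im}$ and splitting on whether $i=m$ yields precisely the two cases in Eq.~(\ref{Nlemma2}). The only step that requires any real care, and that I would double-check, is the bookkeeping at $k=j$: one must verify that in-neighbors of $m$ that coincide with the target contribute nothing, so that the cleaner sum over $\mathcal{N}_{in}(m)$ is legitimate. As a sanity check, the formula has a transparent probabilistic reading: the expected number of visits to $m$ starting from $i$ decomposes over the in-neighbor $k$ through which the walk last enters $m$, weighted by $p_{km}$, with the extra $+1$ when $i=m$ accounting for the initial state $X_0=m$ before any transition occurs.
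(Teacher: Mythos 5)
Your proof is correct and takes essentially the same route as the paper, which simply invokes Lemma~\ref{Nlemma2general} (the identity $F^{\{j\}}P_{\mathcal{T}\mathcal{T}}=F^{\{j\}}-I$ following from $F^{\{j\}}=(I-P_{\mathcal{T}\mathcal{T}})^{-1}$) and reads off the $(i,m)$ entry. Your explicit bookkeeping at $k=j$ and the identification of the $-I$ term with the $i=m$ correction is exactly the detail the paper leaves implicit.
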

\begin{proof}
	It is a special case of Lemma~\ref{Nlemma2general}. Note that
        the recursive relation in Eq.(\ref{eq:F_recursive}) is  in terms of $s$'s outgoing neighbors, while this one is in terms of incoming neighbors of $m$.
\end{proof} 

\begin{theorem}[Absorption Probability \& Normalized Fundamental Matrix] \label{Nlemma3}
	The absorption probability of a target node $j$ in an
        absorbing set $\mathcal{A}=\{j\}\cup \mathcal{S}$ can be
        written in terms of the normalized fundamental matrix
        $F^\mathcal{S}$,  where the columns are normalized by the diagonal entries:
	\begin{equation} 
	Q^{\mathcal{A}}_{ij}=\frac{F^{\mathcal{S}}_{ij}}{F^{\mathcal{S}}_{jj}}
	\end{equation}
\end{theorem}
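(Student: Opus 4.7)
The plan is to give a short probabilistic argument via first-passage decomposition and the strong Markov property, rather than matrix manipulations. The underlying intuition is that visits to $j$ before the chain hits $\mathcal{S}$ can only accumulate \emph{after} the walk first reaches $j$, and from $j$ the expected number of visits to $j$ before absorption in $\mathcal{S}$ is $F^{\mathcal{S}}_{jj}$ regardless of where the walk originally started. Thus $F^{\mathcal{S}}_{ij}$ should factor as ``probability of ever reaching $j$ before $\mathcal{S}$'' times $F^{\mathcal{S}}_{jj}$, and the first factor is exactly $Q^{\mathcal{A}}_{ij}$.

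First I would set up the stopping times $T_j=\inf\{k\ge 0:X_k=j\}$ and $T_\mathcal{S}=\inf\{k\ge 0:X_k\in\mathcal{S}\}$, and use the stochastic form of the fundamental matrix to write
\begin{equation*}
F^{\mathcal{S}}_{ij}=\mathbb{E}_i\Big[\sum_{k=0}^{T_\mathcal{S}-1}\mathbf{1}_{\{X_k=j\}}\Big].
\end{equation*}
Next I would split this expectation over the disjoint events $\{T_j<T_\mathcal{S}\}$ and $\{T_\mathcal{S}<T_j\}$, which together have probability one since $j\notin\mathcal{S}$. On $\{T_\mathcal{S}<T_j\}$ the indicator sum is identically zero (no visit to $j$ can occur before $T_j$). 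On $\{T_j<T_\mathcal{S}\}$ I would apply the strong Markov property at the stopping time $T_j$: the conditional expectation of the remaining sum equals $\mathbb{E}_j[\sum_{k=0}^{T_\mathcal{S}-1}\mathbf{1}_{\{X_k=j\}}]=F^{\mathcal{S}}_{jj}$. Combining the two cases yields $F^{\mathcal{S}}_{ij}=\mathbb{P}_i(T_j<T_\mathcal{S})\,F^{\mathcal{S}}_{jj}=Q^{\mathcal{A}}_{ij}\,F^{\mathcal{S}}_{jj}$, and since $j\notin\mathcal{S}$ forces $F^{\mathcal{S}}_{jj}\ge 1$, dividing gives the claim.

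The only real obstacle is being careful about edge cases. The degenerate case $i=j$ needs a separate sentence: $T_j=0$, so the walk is ``already'' at the target and both $Q^{\mathcal{A}}_{jj}$ and $F^{\mathcal{S}}_{jj}/F^{\mathcal{S}}_{jj}$ equal $1$. One should also note that $F^{\mathcal{S}}_{jj}>0$ relies on $j\notin\mathcal{S}$, which is exactly the hypothesis $\mathcal{A}=\{j\}\cup\mathcal{S}$ with $j$ distinguished. As an alternative verification one could derive the identity algebraically from the incremental Theorem~\ref{Nlemma1} applied with $\mathcal{S}_1=\mathcal{S}$ and $\mathcal{S}_2=\{j\}$ combined with $Q^{\mathcal{A}}=F^{\mathcal{A}}P_{\mathcal{T}\mathcal{A}}$, but the probabilistic route above is considerably shorter and more transparent.
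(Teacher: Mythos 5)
Your proof is correct, but it takes a genuinely different route from the paper. The paper argues purely algebraically: it starts from $Q^{\mathcal{A}}_{ij}=[F^{\mathcal{A}}P_{\mathcal{TA}}]_{ij}=\sum_{m\in\mathcal{T}}F^{\mathcal{A}}_{im}p_{mj}$, substitutes the Schur-complement identity of Theorem~\ref{Nlemma1} to express $F^{\mathcal{A}}_{im}=F^{\mathcal{S}}_{im}-F^{\mathcal{S}}_{ij}F^{\mathcal{S}}_{jm}/F^{\mathcal{S}}_{jj}$, and then collapses the resulting sums using Lemma~\ref{Nlemma2general} (in the form $\sum_m F^{\mathcal{S}}_{im}p_{mj}=F^{\mathcal{S}}_{ij}-\delta_{ij}$), after which the telescoping gives $F^{\mathcal{S}}_{ij}/F^{\mathcal{S}}_{jj}$. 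Your argument instead conditions on whether the walk reaches $j$ before $\mathcal{S}$ and applies the strong Markov property at $T_j$ to factor $F^{\mathcal{S}}_{ij}=\mathbb{P}_i(T_j<T_{\mathcal{S}})\,F^{\mathcal{S}}_{jj}$ --- precisely the ``alternative'' you mention in reverse. What your route buys is transparency and independence from the matrix machinery: it explains \emph{why} the identity holds (visits to $j$ can only accumulate after the first visit to $j$) and needs no invertibility or Schur-complement apparatus, only the stochastic form of $F$. What the paper's route buys is consistency with its program of deriving all Markov metrics from the fundamental matrix/tensor and its previously established relations, and it avoids any measure-theoretic care with stopping times. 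Two small points to tighten in your write-up: the exhaustiveness of $\{T_j<T_{\mathcal{S}}\}\cup\{T_{\mathcal{S}}<T_j\}$ needs not just $j\notin\mathcal{S}$ but also that $\mathcal{A}$ is hit in finite time almost surely (guaranteed here since $F^{\mathcal{S}}$ exists, i.e., $\mathcal{S}$ meets every recurrent class); and the case $i=j$ lies outside the domain of $Q^{\mathcal{A}}$ as the paper defines it ($i\in\mathcal{T}=\mathcal{V}\setminus\mathcal{A}$), so it is harmless but not required.
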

\begin{proof}
	\begin{eqnarray}
	Q^{\mathcal{A}}_{ij} &=&[F^{\mathcal{A}}P_{\mathcal{TA}}]_{ij}
	\\&=&\sum_{m\in\mathcal{T}} F^{\mathcal{A}}_{im}p_{mj} \nonumber
	\\&=&\sum_{m\in\mathcal{T}} (F^{\mathcal{S}}_{im}-\frac{F^{\mathcal{S}}_{ij} F^{\mathcal{S}}_{jm}}{F^{\mathcal{S}}_{jj}})p_{mj} \nonumber
	\\&=&\sum_{m\in\mathcal{T}} F^{\mathcal{S}}_{im}p_{mj}-\frac{F^{\mathcal{S}}_{ij}}{F^{\mathcal{S}}_{jj}}\sum_{m\in\mathcal{T}} F^{\mathcal{S}}_{jm}p_{mj} \nonumber
	\\&=&F^{\mathcal{S}}_{ij}-\frac{F^{\mathcal{S}}_{ij}}{F^{\mathcal{S}}_{jj}}(F^{\mathcal{S}}_{jj}-1) \nonumber
	\\&=&\frac{F^{\mathcal{S}}_{ij}}{F^{\mathcal{S}}_{jj}}, \nonumber
	\end{eqnarray}
	where the third and fifth equalities follow directly from of Theorem~\ref{Nlemma1} and Lemma~\ref{Nlemma2general}, respectively.
\end{proof}

We are now in a position to gather and derive a number of useful relations among the random walk
metrics.

\begin{relation}[Complementary relation of absorption probabilities]
	\begin{equation} \label{Qcomplement}
	Q^{\mathcal{A}}_{ij}=1-\sum_{k\in\mathcal{A}\setminus\{j\}} Q^{\mathcal{A}}_{ik},
	\end{equation}
	where $i \in \mathcal{T}$ and $j \in \mathcal{A}$.
\end{relation}
\begin{proof}
	Based on the definition of $Q$ and the assumption that all the
        nodes in $\mathcal{T}$ are transient, the probability that a random walk eventually ends up in set $\mathcal{A}$ is 1.
\end{proof}

\begin{relation}[Relations between the fundamental matrix and commute time]
	\begin{eqnarray} \label{NpC}
	 &(1)& F_{ii}^{\{j\}}=\pi_i C_{ij} \label{F=piC}
\\&(2)& \frac{F_{im}^{\{j\}}}{\pi_m}+\frac{F_{mi}^{\{j\}}}{\pi_i}=C_{ij}+C_{jm}-C_{im}
\\&(3)& \frac{F_{im}^{\{j\}}}{\pi_m}+\frac{F_{ij}^{\{m\}}}{\pi_j}=C_{jm}
\\&(4)& F_{im}^{\{j\}}+F_{jm}^{\{i\}}=\pi_m C_{ij}
	\end{eqnarray}
\end{relation}
\begin{proof}
	Use (\ref{N+}) and (\ref{C+}).
\end{proof}

\begin{relation}[The hitting time detour overhead in terms of other metrics]
	\begin{eqnarray} 
	 &(1)& H_i^{\{j\}}+H_j^{\{m\}}-H_i^{\{m\}}=\frac{F_{im}^{\{j\}}}{\pi_m} \label{NpH}
\\&(2)& H_i^{\{j\}}+H_j^{\{m\}}-H_i^{\{m\}}= Q_{i}^{\{m,\overline{j}\}}C_{mj}
	\end{eqnarray}
\end{relation}
\begin{proof}
	For the first equation use (\ref{N+}) and (\ref{H+}), and for the second one use the previous equation along with (\ref{Nlemma3}) and (\ref{F=piC}).
\end{proof}

\begin{relation}[The hitting time for two target nodes in terms of
  hitting time for a single target]
	\begin{equation}
		 H_i^{\{j,k\}}=H_i^{\{k\}}-Q_i^{\{j,\overline{k}\}}H_j^{\{k\}}=H_i^{\{j\}}-Q_i^{\{k,\overline{j}\}}H_k^{\{j\}},
	\end{equation}
	which can also be reformulated as: $H_i^{\{j\}}=H_i^{\{j,k\}}+Q_i^{\{k,\overline{j}\}}H_k^{\{j\}}$.
\end{relation}
\begin{proof}
	Aggregate two sides of Eq.(\ref{Nlemma1}) over $m$ and
        substitute Eq.(\ref{Nlemma3}) in it.
\end{proof}

\begin{relation}[Inequalities for hitting time]
	\begin{eqnarray} 
	 &(1)& H_i^{\{m\}}+H_m^{\{j\}} \geq H_i^{\{j\}}  \qquad \textrm{\textbf{(triangular inequality)}} \label{Htrieq}
\\&(2)& H_i^{\{j\}} \geq H_i^{\{j,m\}}
\\&(3)& H_i^{\{m\}}+H_m^{\{j,k\}} \geq H_i^{\{j,k\}}
	\end{eqnarray}
\end{relation} 
\begin{proof}
	For the first inequality, use (\ref{H+}) and (\ref{Lgeq}).
	For the second inequality,  use the aggregated form of
        Eq.(\ref{Nlemma1}) over $m$ and the fact that the entries of
        $F$  are  non-negative. The third inequality  is a
        generalization of the first one.
\end{proof}

\begin{relation}[Inequalities for the fundamental matrix]
	\begin{eqnarray} \label{Ngeq}
	 &(1)& F_{im}^{\{j\}}F_{kk}^{\{j\}}\geq F_{ik}^{\{j\}}F_{km}^{\{j\}}
\\&(2)& F_{kk}^{\{j\}} \geq F_{ik}^{\{j\}}
	\end{eqnarray}
\end{relation}
\begin{proof}
	For the first inequality,  use Eq.(\ref{Nlemma1}) and the
        fact that $F$ is non-negative. The second one can be derived
        from Eqs.(\ref{F=piC}), (\ref{NpH}) and (\ref{Htrieq}). Note
        that these two inequalities hold for any absorbing set
        $\mathcal{A}$, hence we drop the superscripts. 
\end{proof}

\begin{relation}[Inequality for absorption probabilities]
	\begin{equation}
		 Q_{i}^{\{m,\overline{j}\}}\geq Q_{i}^{\{k,\overline{j}\}}Q_{k}^{\{m,\overline{j}\}}
	\end{equation}
\end{relation}
\begin{proof}
	Use (\ref{Nlemma3}) and (\ref{Ngeq}).
\end{proof}

\begin{relation}[Inequality for the digraph Laplacian matrix]
	\begin{equation} \label{Lgeq}
	L_{im}^++L_{kk}^+\geq L_{ik}^++ L_{km}^+
	\end{equation}
\end{relation}
\begin{proof}
	Use (\ref{N+}) and the fact that $F$'s entries are always non-negative.
\end{proof} 

\begin{relation}[Relations for undirected networks (reversible Markov chain)]
	\begin{eqnarray}
	 &(1)&\frac{F_{im}^{\{S\}}}{\pi_m}=\frac{F_{mi}^{\{S\}}}{\pi_i}
\\&(2)&Q_{i}^{\{m,\overline{j}\}}C_m^{\{j\}}= Q_{m}^{\{i,\overline{j}\}}C_i^{\{j\}}
\\&(3)&H_i^{\{m\}}+H_m^{\{j\}}+H_j^{\{i\}}=H_m^{\{i\}}+H_j^{\{m\}}+H_i^{\{j\}}
	\end{eqnarray}
\end{relation}
\begin{proof}
	The first equation follows from Eq.(\ref{N+}) and the fact
        that $L^+$ is symmetric for undirected networks.
	The second equation can be derived by using Eqs.
        (\ref{Nlemma3}), (\ref{N+}), (\ref{C+}) and  the fact that $L^+$ is symmetric.
	The third equation follows from Eq.(\ref{H+}) and  $L^+$ being symmetric.
\end{proof}

\begin{figure}
	\centering
	\includegraphics[width=0.8\textwidth]{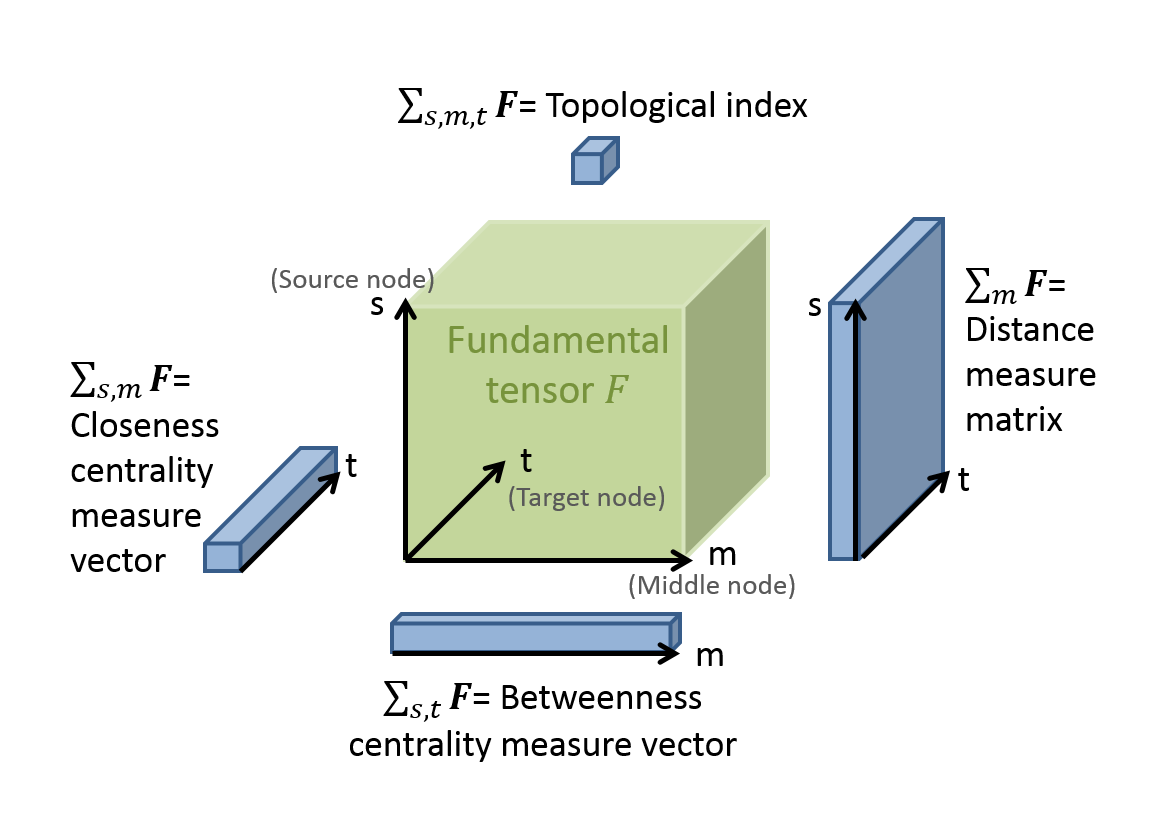}
	\caption{Markov fundamental tensor and unifying framework for computing the random walk distance, betweenness measure, closeness measure, and topological index}\label{fig:tensor}
\end{figure}



\section{Unifying Random-Walk Distance, Centrality, and Topological Measures} \label{sec:measureUnification}

Many network measures have been proposed in the literature for network
analysis purposes \cite{borgatti2006graph}, such as distance metrics for measuring the similarity (or diversity) between nodes or
entities of a network, centrality measures to assess a node's involvement or importance
in the connectivity or communication between network entities, and
topological indices to measure the structural stability of
networks. In this section, we review some of these network measures
proposed in the literature, and show that these measures can be
unified in terms of the fundamental tensor, which  provides a
coherent framework for computing them and understanding the relations
among them. 
\begin{statement}
	Fundamental tensor $\boldsymbol{F}$ unifies various network
        random-walk measures via summation along one or more dimensions shown in Figure~\ref{fig:tensor}.
\end{statement}

\subsection{Random-walk distance measure}
The hitting time metric has been used extensively in different
application domains, such as a distance (or dissimilarity) measure for clustering and
classification purposes~\cite{chen2008clustering}. Note that this
distance metric satisfies two out of three conditions for the general
distance metric: It is positive when two ends are different and zero
when two ends are identical. As noted earlier, the hitting time metric
is in general not symmetric, but it satisfies the triangle inequality.
In Section~\ref{sec:definition}, we have shown that hitting time can
be computed from the fundamental tensor by summing over $m$'s (the
middle node dimension, see Figure~\ref{fig:tensor}).
\begin{equation}
Distance_{\text{rw}}(s,t) = H_s^{\{t\}}=\sum_m \boldsymbol{F}_{smt}.
\end{equation}
With a cost matrix $W$, the hitting cost distance (\ref{eq:B=Fr}) is
obtained by the weighted sum over the medial node dimension of the fundamental tensor: $\U_s^{\{t\}}= \sum_m \boldsymbol{F}_{smt} b_m,$
where $b_m = \sum_i w_{mi}p_{mi}$ is the expected outgoing cost of node $m$.  

\subsection{Random-walk centrality measures}
Network centrality measures can be  broadly categorized into two main types~\cite{borgatti2006graph}: i)
distance-based and ii) volume-based. The {\em closeness} centrality is an
example of the  distance-based measures, whereas the  {\em
  betweenness} centrality is an example of volume-based measures. 
\begin{itemize}
	\item Random-walk closeness measure:  Closeness centrality
          ranks nodes in a network  based on their total distance from
          other nodes of the network. This measure reflects how
          easily/closely the node is accessible/reachable from the
          other parts of the network,  and in a nutshell how
          ``central'' the node  is located within a network. The classical
          closeness centrality metric is defined using the shortest
          path distances.  Noh and
          Rieger \cite{noh2004random} introduces the {\em random walk
            closeness centrality}, which is  defined using the hitting
          time distance: A node is considered to have a high
          centrality value if and only if  its total hitting time distances
          from other nodes in the network is small. This closeness
          centrality measure  can be easily expressed in terms of the random walk fundamental tensor:
	\begin{equation}
	Closeness_{\text{rw}}(t)=\sum_s H_s^{\{t\}}=\sum_{s,m} \boldsymbol{F}_{smt},
	\end{equation}
	or in the reciprocal form to imply lower importance with small closeness value: $Closeness_{\text{rw}}(t)=\frac{|\mathcal{V}|}{\sum_{s,m} \boldsymbol{F}_{smt}}$.
	
	\item Random-walk betweenness measure: Betweenness measure
          quantifies the number of times a node acts as a ``bridge''
          along the paths between different parts of the network. 
          The larger the number of paths crossing that node, the more
          central the node is. As a special case, the node degree, $deg(m)$, can
          be viewed as a betweenness centrality measure in an
          undirected network. Clearly, it captures how many paths of
          length 1 going through node $m$ (or many 1-hop neighbors it
          has)~\cite{borgatti2006graph}. It is also proportional to
          the total number of (random) walks passing through node $m$
          from any source to any target in the network. This follows
          from the following more general statement. For a general
          (strongly connected)  network, we define the
          {\em random walk betweenness} of node $m$ as follows and
          show that it is proportional to $\pi_m$, the stationary
          probability of node $m$:
	\begin{eqnarray} \label{eq:betweenness}
	Betweenness_{\text{rw}}(m) &=& \sum_{s,t} \boldsymbol{F}_{smt}
	\\&=& \sum_{s,t} (L_{sm}^+ - L_{tm}^+ - L_{st}^+ + L_{tt}^+)\pi_m  
	\\&=& |\mathcal{V}| \sum_t L_{tt}^+\pi_m
	\\&=& |\mathcal{E}| K\pi_m,
	\end{eqnarray} 
	where $K$ is the Kirchhoff index (see
        Section~\ref{sec:Kirchhoff}). The third equality follows by
        using the fact that  the column sum of the digraph Laplacian
        matrix $L^+=(\Pi(I-P))^+$ is zero
        \cite{Li:Digraph,boley2011commute}. For a (connected)
        undirected network, $\pi_m=\frac{d_m}{2|\mathcal{E}|}$, where $d_m$ is the
        degree of node $m$.

	For undirected networks, Newman \cite{newman2005measure} proposes  a variation of the random
        walk betweenness measure defined above, which we denote by
        $Betweenness_{\text{Newman,bidirect}}(m)$ (the use of
        subscript {\em bidirect} will be clear below): it is defined
        as the (net) electrical  current flow $I$ through a medial
        node in an undirected network (which can be viewed as an
        electrical resistive network with bi-directional links with
        resistance),  when a unit current flow is injected at a source
        and removes at a target (ground), aggregated over all such
        source-target pairs. Formally, we have 
	\begin{eqnarray} 
	Betweenness_{\text{Newman,bidirect}}(m) &=& \sum_{s,t} I(s\rightarrow m\rightarrow t) \nonumber
	\\&=& \sum_{s,t} \sum_k \frac{1}{2}|\boldsymbol{F}_{smt} p_{mk} - \boldsymbol{F}_{skt} p_{km}|. \nonumber
	\end{eqnarray}
We remark that the original definition given by Newman is based on
current flows in electrical networks, and is only valid for {\em
  undirected} networks. Define $f(\boldsymbol{F}_{smt}) = \sum_k
\frac{1}{2}|\boldsymbol{F}_{smt} p_{mk} - \boldsymbol{F}_{skt}
p_{km}|$, then 
$Betweenness_{\text{Newman,directed}}(m)=\sum_{s,t}f(\boldsymbol{F}_{smt})$
yields a general definition of Newman's random walk betweenness measure that also
holds for directed networks. In particular, we show that if a network
is strictly {\em unidirectional}, namely, if $e_{ij}\in E$ then
$e_{ji}\notin E$, Newman's random walk betweenness centrality reduces
to $	Betweenness_{\text{rw}}(m)=|\mathcal{E}| K\pi_m$: 
	\begin{eqnarray}
	Betweenness_{\text{Newman,unidirect}}(m) &=& \sum_{s,t} \sum_{k} |\boldsymbol{F}_{smt} p_{mk}| \nonumber
	\\&=& \sum_{s,t} \boldsymbol{F}_{smt} \sum_{k} p_{mk} \nonumber
	\\&=& \sum_{s,t} \boldsymbol{F}_{smt} = |\mathcal{E}| K\pi_m, \label{eq:sumTensor}
	\end{eqnarray}
	where $K$ is the Kirchhoff index (see
        Section~\ref{sec:Kirchhoff}) and the last equality follows
        from Corollary~\ref{cor:F}. 
\end{itemize}

\subsection{Kirchhoff Index}\label{sec:Kirchhoff}
The term {\em topological index} is a single metric that characterizes
the topology (``connectivity structure'') of a network; it has been
widely used in mathematical chemistry to reflect certain structural
properties  of the underlying molecular graph
\cite{rouvray1986predicting}\cite{rouvray1985role}.
Perhaps the most known topology index is the Kirchhoff
index~\cite{klein1993resistance} which has found a variety of
applications~\cite{zhou2008note,palacios2010bounds,bendito2010kirchhoff,ranjan2013geometry,palacios2001resistance}. Kirchhoff index is also closely connected to Kemeney's constant \cite{wang2017kemeny,kirkland2016random}. 
The Kirchhoff index is often defined in terms of {\em effective
  resistances}~\cite{klein1993resistance}, $K(G)=\frac{1}{2}\sum_{s,t}
\Omega_{st}$, which is closely related to commute times, as
$\Omega_{st}=\frac{1}{|\mathcal{E}|}C_{st}$~\cite{tetali1991random}.
Hence we have 
\begin{equation}
K(G)=\frac{1}{2|\mathcal{E}|}\sum_{s,t} C_{st}= \frac{|\mathcal{V}|}{|\mathcal{E}|}\sum_t L_{tt}^+=\frac{1}{|\mathcal{E}|} \sum_{s,m,t} \boldsymbol{F}_{smt}, 
\end{equation}
where the second equality comes  from Eq.(\ref{C+}). 
In other words, the Kirchhoff index can be computed by summing over
all three dimensions in Figure~\ref{fig:tensor}, normalized by the
total number of edges. 

The relations between Kirchhoff index, effective resistance, and Laplacian matrix have been well studied in the literature.
 The authors
in~\cite{ranjan2013geometry} provided three interpretations of
$L^+_{ii}$ as a topological centrality measure, from effective resistances in an electric network, random
walk detour costs, and graph-theoretical topological connectivity
via connected bi-partitions, and demonstrate that the Kirchhoff index, as a
topological index, captures the overall robustness of a network.
 The relation between the effective resistance and the Moore-Penrose inverse of the Laplacian matrix is more elaborated in \cite{gutman2004generalized}, and insightful relations between Kirchhoff index and inverses of the Laplacian eigenvectors can be found in \cite{zhu1996extensions,gutman1996quasi}.

\section{Characterization of Network Articulation Points and Network Load Distribution}\label{sec:articulpoint}
We extend the definition of {\em articulation point} to a general (undirected and directed) network as a node whose
removal reduces the amount of reachability in the network. For
instance, in a network $G$, if $t$ is previously reachable
from $s$, i.e. there was at least one path from $s$ to $t$, but $t$ is no longer reachable
from $s$ after removing $m$, node $m$ is an
articulation point for network $G$. Note that $s$ may still be reachable from $t$ after removing $m$ in a directed network, which is not the case for an undirected network. Hence, in an undirected network, the reduction in the number of reachabilities results to the increase in the number of connected components in the network, which is the reason to call articulation point as {\em cut vertex} in the undirected networks. Removal of an articulation point in a {\em directed} network,
however, does not necessarily increase the number of connected components in the
network. 

As an application of the fundamental tensor, we introduce  the normalized fundamental tensor
$\hat{\boldsymbol{F}}$ and show that its entries contain information
regarding articulation points in a general (directed or undirected)
network.  If $\boldsymbol{F}_{smt}$ exists, its {\em normalized}
version is defined as follow,
\begin{equation}\label{eq:normalized-fundamental-tensor}
\hat{\boldsymbol{F}}_{smt}=\begin{cases}
\frac{\boldsymbol{F}_{smt}}{\boldsymbol{F}_{mmt}}  & \text{if } s,m\neq t \\
0  & \text{if } s=t \text{ or } m=t
\end{cases} 
\end{equation}
The normalized fundamental tensor satisfies the following properties: a)
$0\leq \hat{\boldsymbol{F}}(s,m,t) \leq 1$, and b)
$\hat{\boldsymbol{F}}_{smt} = Q_s^{\{m,\overline{t}\}}$.
Recall that $Q_s^{\{m,\overline{t}\}}$ is the absorption probability
that a random walk starting from node $s$ hits (is absorbed by) node
$m$ sooner than node $t$.  The second property (b) is a result of  Theorem~\ref{Nlemma3}  and the
first property (a) follows from (b). Clearly,
$\hat{\boldsymbol{F}}_{smt} =Q_s^{\{m,\overline{t}\}}=1$ means that
with probability $1$, any random walk starts from node $s$ always hit
node $m$ before node $t$. Hence node $m$ is on any path (thus walk)
from $s$ to $t$. Hence it is an articulation point. We therefore have
the following statement:
\begin{statement}
The normalized fundamental tensor captures the articulation points of
a network: if $\hat{\boldsymbol{F}}_{smt}=1$, then node $m$ is an
articulation point; namely, node $m$ is located on all paths from $s$ to $t$. On the other extreme, $\hat{\boldsymbol{F}}_{smt}=0$ indicates
that $m$ is not located on any path from $s$ to $t$ and thus it plays no role for this reachability.  
\end{statement}

Figure~\ref{fig:ArticulationPoint} depicts  two simple networks, one
undirected and one directed, and displays the corresponding normalized
fundamental tensors we have computed for these two networks (the
tensors are ``unfolded'' as a series of matrices, each with fixed
$t$). Any column that contains an entry with value $1$ indicates the
corresponding node $m$, $1 \leq m \leq 5$, is located on all paths between a pair of source and target, and so is an articulation point
for the network \footnote{As a convention, the source node is considered as the articulation point of the reachability, but not the target.}. Counting the number of 1's in each column $m$ over the
entire tensor yields the number of source-target pairs for which  node $m$ is
an articulation point. The larger this count is, the more critical
node $m$ is for the overall network reachability. For instance, for
both networks,  node 3 is the most critical node for network reachabilities. 

\begin{figure}
	\centering
	\includegraphics[width=0.6\textwidth, height=0.9\textwidth]{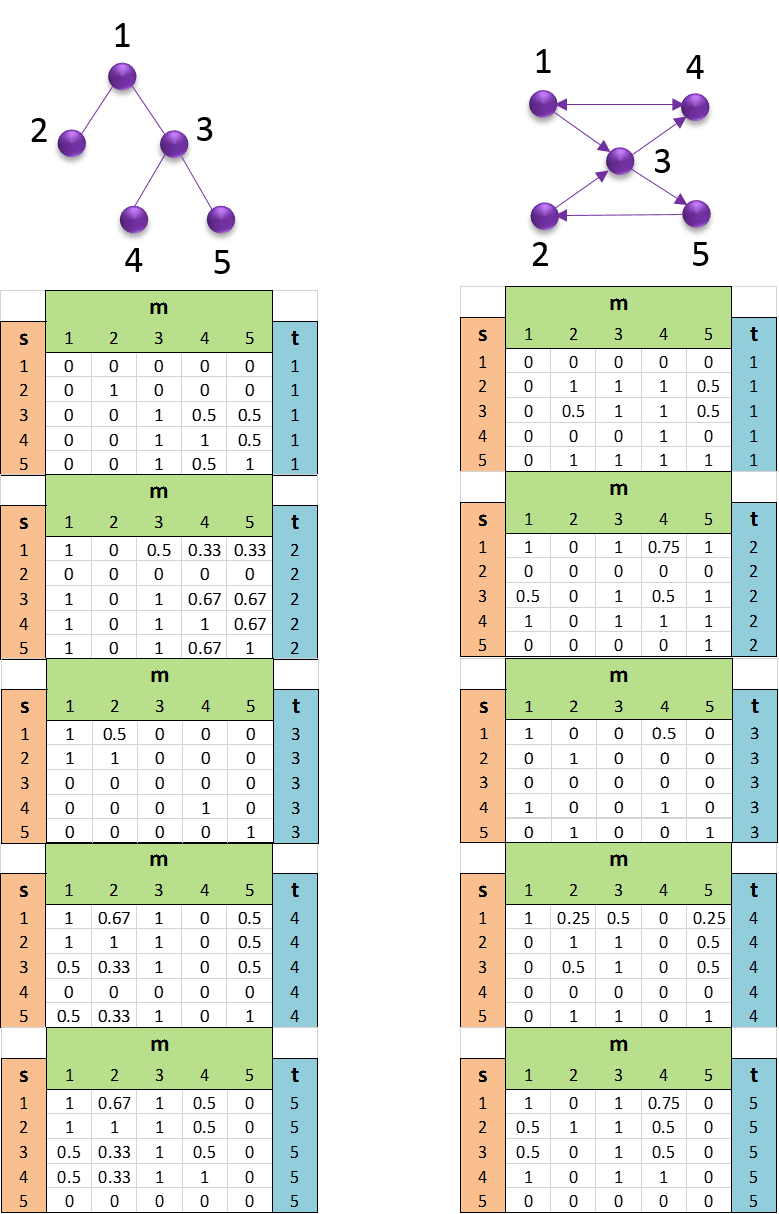}
	\caption{Two networks, one undirected and one directed, and the corresponding normalized fundamental tensor}\label{fig:ArticulationPoint}
\end{figure}

More generally, we can view $\hat{\boldsymbol{F}}_{smt}$ as a measure
of how critical a role node $m$ plays in the reachability from node
$s$ to node $t$. As a generalization of articulation points, we define
the overall {\em load} that node $m$ carries for all source-target
pairs in a network as follows:
\begin{equation} \label{eq:load}
Load(m) = \frac{1}{(n-1)^2}\sum_{s,t} \hat{\boldsymbol{F}}_{smt}, 
\end{equation}

It is interesting to compare Eq.(\ref{eq:load}) with
Eq.(\ref{eq:betweenness}), where the latter (the unnormalized 
summation  $\sum_{s,t} \boldsymbol{F}_{smt}$) is proportional to the
stationary probability of node $m$ (and degree of $m$ if the network
is undirected). The distribution of $Load(m)$'s provides a
characterization of how balanced a network in terms of distributing its load
(reachability between pairs of nodes), or how robust it is against
targeted attacks. A network with a few high-valued articulation points
({\em e.g.}, a star network) is more vulnerable to the failure of a few
nodes. Using a few synthetic networks with specific topologies
as well as real-world networks as examples,  Figure~\ref{fig:load}
plots the distribution of $Load(m)$ for these networks (sorted based
on increasing values of $Load(m)$'s).  Among the specific-shaped networks, it is interesting to note that
comparing to a chain network, the loads on a cycle network are evenly
distributed -- this shows the huge difference that adding a single
edge can make in the structural property of a network. It is not
surprising that the complete graph has evenly distributed loads. 
In constrast, a star graph has the most skewed load distribution, with
the center node as an articulation point of the network. Comparing the
binary tree, the grid network has a less skewed load distribution. It
is also interesting to compare the load distribution of the binary
with that of  a  3-ary ``fat tree'' network -- such a topology is used
widely in data center networks~\cite{fat_tree}. The real networks used
in Figure~\ref{fig:load}(b) include the Arxiv High Energy Physics
 - Phenomenology collaboration network (CAHepPh)~\cite{ca-HepPh},
 a sampled network of Facebook~\cite{facebook}, the coauthorship
 network of network scientists (netSci)~\cite{netSci}, the Italian
 power grid~\cite{powerItaly}, and a protein-protein  interaction network~\cite{protein}.
 For comparison, we also include three networks generated via two
 well-known random network models,  the Preferential Attachment generative model (PA)
 \cite{PA} and  Erdos Renyi (ER) random graph model \cite{random} with two different initial links of 8
 (random) and 40 (random2). We see that the two ER random networks
 yield most balanced load distributions, whereas the PA network
 exhibits behavior similar to a tree network, with a few nodes bearing
 much higher loads than others. The real networks exhibit load
 distributions varying between these types of random networks (with
 the  Italian power grid closer to an ER random network, whereas
 netSci closer to a PA random network).

\begin{figure}%
	\centering
	\subfloat[]{{\includegraphics[width=0.45\textwidth]{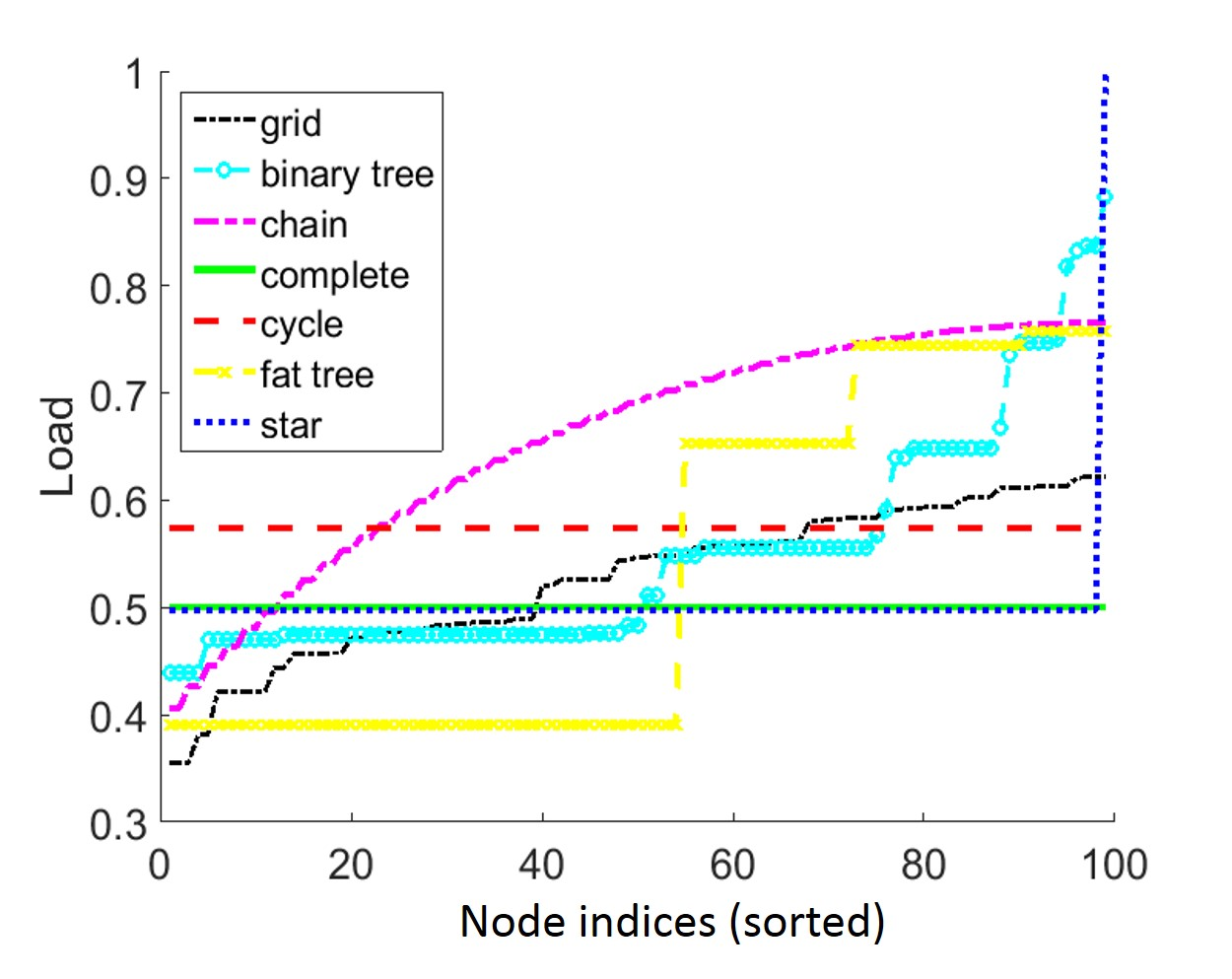} }}%
	~
	\subfloat[]{{\includegraphics[width=0.45\textwidth]{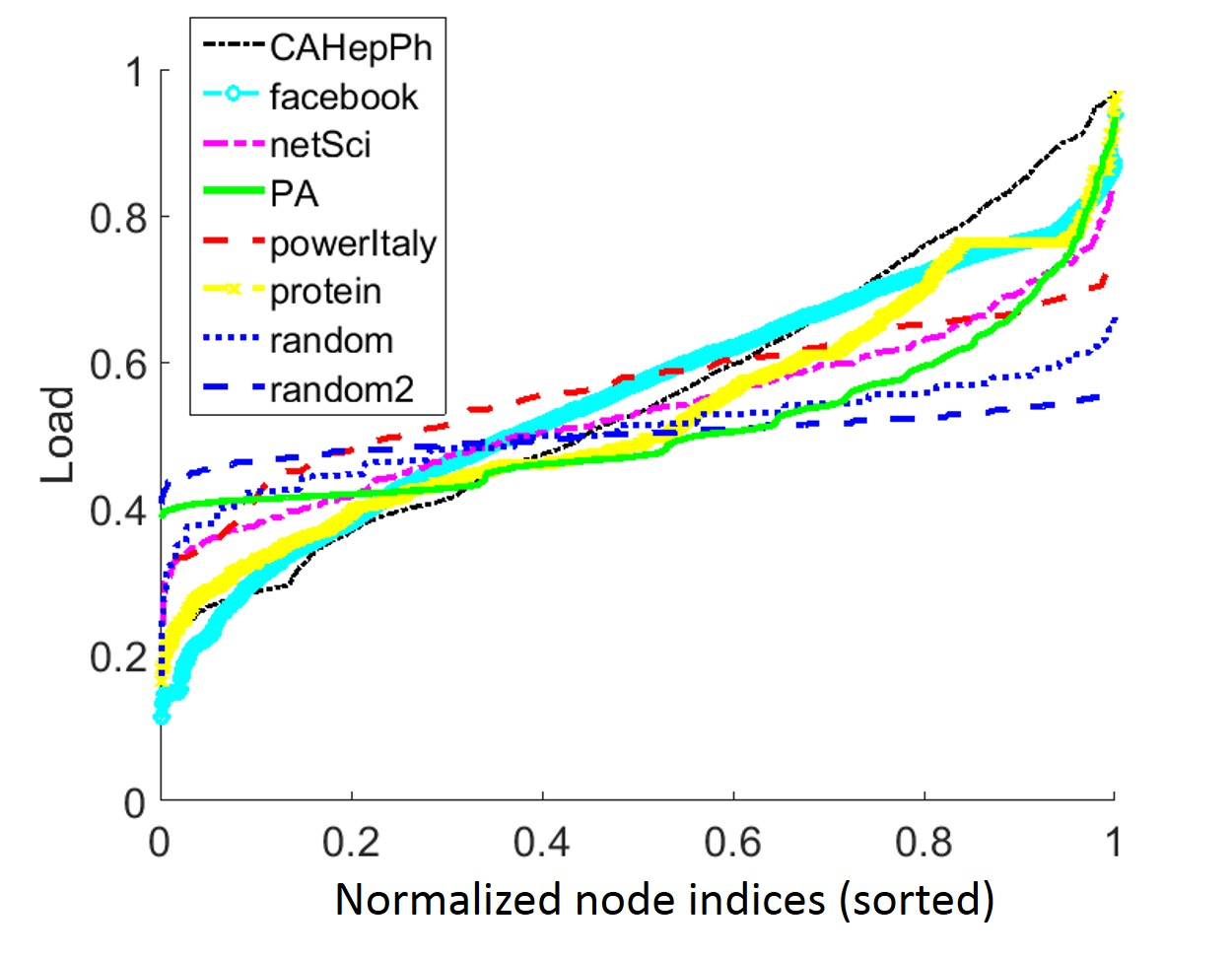} }}%
	\caption{Load balancing in a) specific-shaped networks and b) real-world networks}\label{fig:load}
\end{figure} 

\section{Most Influential Nodes in Social Networks} \label{sec:most-influential}
Online social networks have played a key role as a medium for the
spread of information, ideas, or ``influence'' among their
members. The Influence maximization problem in social networks is about finding the most
influential persons who can maximize the spread of influence in the
network. This problem has applications in viral marketing, where a
company may wish to spread the publicity, and eventually the adoption, of a new
product via the most influential persons in popular social networks. A
social network is modeled as a (directed or undirected) graph where nodes represent the
users, and edges represent relationships and interactions between the
users. An {\em influence cascade} over a network can be modeled by a
diffusion process, and the objective of the influence maximization
problem is to find the $k$ most influential persons as the initial
adopters who will lead to most number of adoptions.  

The heat conduction model \cite{golnari2015revisiting} is a diffusion
process which is inspired by how heat transfers through a medium from the part with higher temperature to the part with lower
temperature. In this diffusion process, the probability that a user 
adopts the new product is a linear function of adoption probabilities 
of her friends who have influence on her as well as her own
independent tendency. We modeled the independent tendency of users
for the product adoption by adding an {\em exogenous} node, indexed as $o$, and linked to all of the
nodes in the network. Network $G$ with added node $o$ is called
extended $G$, denoted by $G^o$. We showed that the influence
maximization problem for $k=1$, where $k=\# initial\text{ }adopters$,
under the heat conduction diffusion process has the following solution in
terms of the normalized fundamental tensor over $G^o$
\cite{golnari2015revisiting}:    
\begin{equation} \label{eq:MIP}
t^* = \argmax_{t} \sum_{s \in \mathcal{V}} \hat{\boldsymbol{F}}_{sto}.
\end{equation}

We also proved that the general influence maximization problem for $k>1$ is NP-hard \cite{golnari2015revisiting}. However, we proposed an
efficient greedy algorithm, called {\em C2Greedy}
\cite{golnari2015revisiting}, which finds a set of
initial adopters who produce a provably near-optimal 
influence spread in the network. The algorithm iteratively finds the
most influential node using Eq.(\ref{eq:MIP}), then removes it from the
network and solves the equation to find the next best initiator.  

\begin{statement} For $k=1$,  $\argmax_{t} \sum_{s \in \mathcal{V}} \hat{\boldsymbol{F}}_{sto}$
finds the most influential node of network $G^o$ as the initial
adopter for maximizing the influence spread over the network with heat
conduction \cite{golnari2015revisiting} as the diffusion process.
For $k>1$, the greedy algorithm,  {\em C2Greedy}~\cite{golnari2015revisiting}, employs this relation to
iteratively find the $k$ most influential nodes, which yields a
provably near-optimal solution. 
\end{statement}

\begin{figure}%
	\centering
	\subfloat[]{{\includegraphics[width=0.45\textwidth]{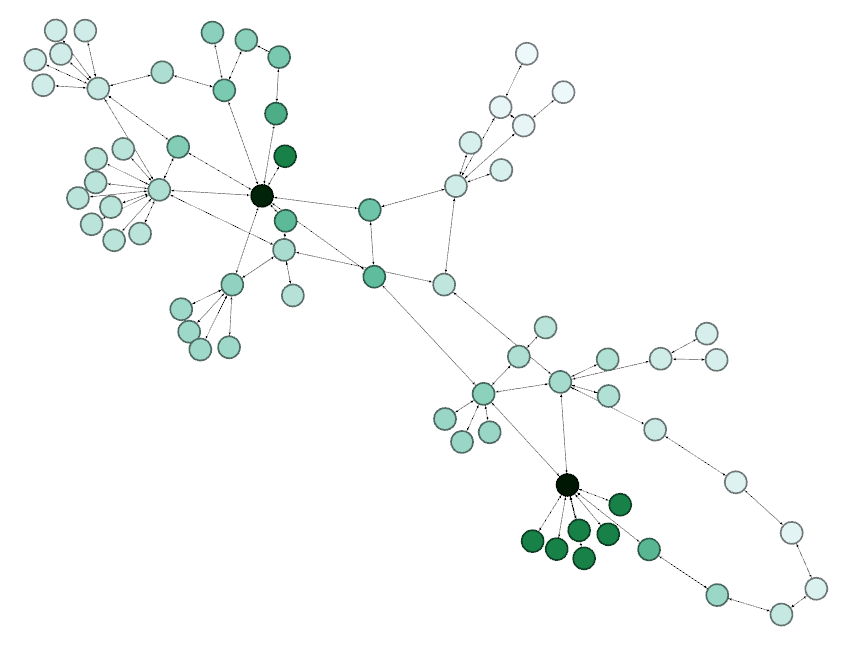} }}%
	~
	\subfloat[]{{\includegraphics[width=0.45\textwidth]{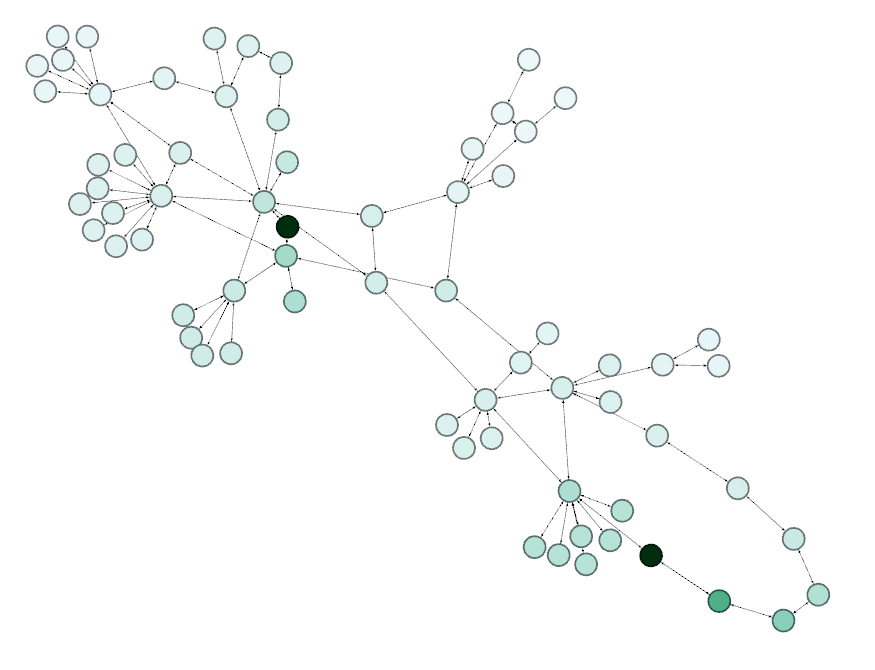} }}%
	\caption{Influence spread by a) two most influential nodes found from C2Greedy~\cite{golnari2015revisiting}, b) two neighbors of nodes in part a}%
	\label{fig:influence_spread}%
\end{figure}

In \cite{golnari2015revisiting}, we showed that C2Greedy outperforms the state-of-the-art influence maximization algorithms in both performance and speed, which we do not repeat here. Instead, we present two new sets of experiments in the rest.
 
We remark that the metric in Eq.(\ref{eq:MIP}) addresses both the
{\em global} characteristics of the network by placing the most
influential node in the critical and strategical ``center'' of the network,
and the {\em local} characteristics by specifying the highly populated
and ``neighbor-rich'' nodes. In Figure~\ref{fig:influence_spread}(a), we 
visualize the influence spread of
the two most influential nodes found from C2Greedy using the ESNet \cite{esnet} network. The initiators are colored
in black, and the green shades indicate the
influence spread over the nodes in the network; the darker the green, the higher probability of production adoption for the node. In
Figure~\ref{fig:influence_spread}(b), we pick two nodes, which are a neighbor of the two
most influential nodes identified by C2Greedy, as the initiators, and
visualize the probability of influence spread caused by these two
nodes over other nodes in the network. The lower green intensity of
Figure~\ref{fig:influence_spread}(b), compared to that of Figure~\ref{fig:influence_spread}(a)
shows that not any two initiators -- even if they are their immediate neighbors and {\em
  globally} located very closely -- can cause the same influence spread
as the  two most influential nodes identified by C2Greedy.

Moreover, we show that the $k$ most influential initiators found by C2Greedy have higher influence spread in the network compared to that of well-known centrality/importance measure algorithms: 1- top $k$ nodes with highest
(in-)degree, 2- top $k$ nodes with highest closeness centrality scores, 3- top $k$ nodes with highest
Pagerank score~\cite{pagerank}, and 4- a benchmark which consists of $k$ nodes picked randomly. For this purpose, we use real-world network data from three social
networks, wiki vote \cite{wikiVote}, hepPh citation \cite{cit-HepPh},
and Facebook \cite{facebook}. Figure~\ref{fig:k_init} illustrates how the C2Greedy outperforms the other algorithms for a wide range of $k$.

\begin{figure}%
	\centering
	\subfloat[wiki vote]{{\includegraphics[width=0.3\textwidth]{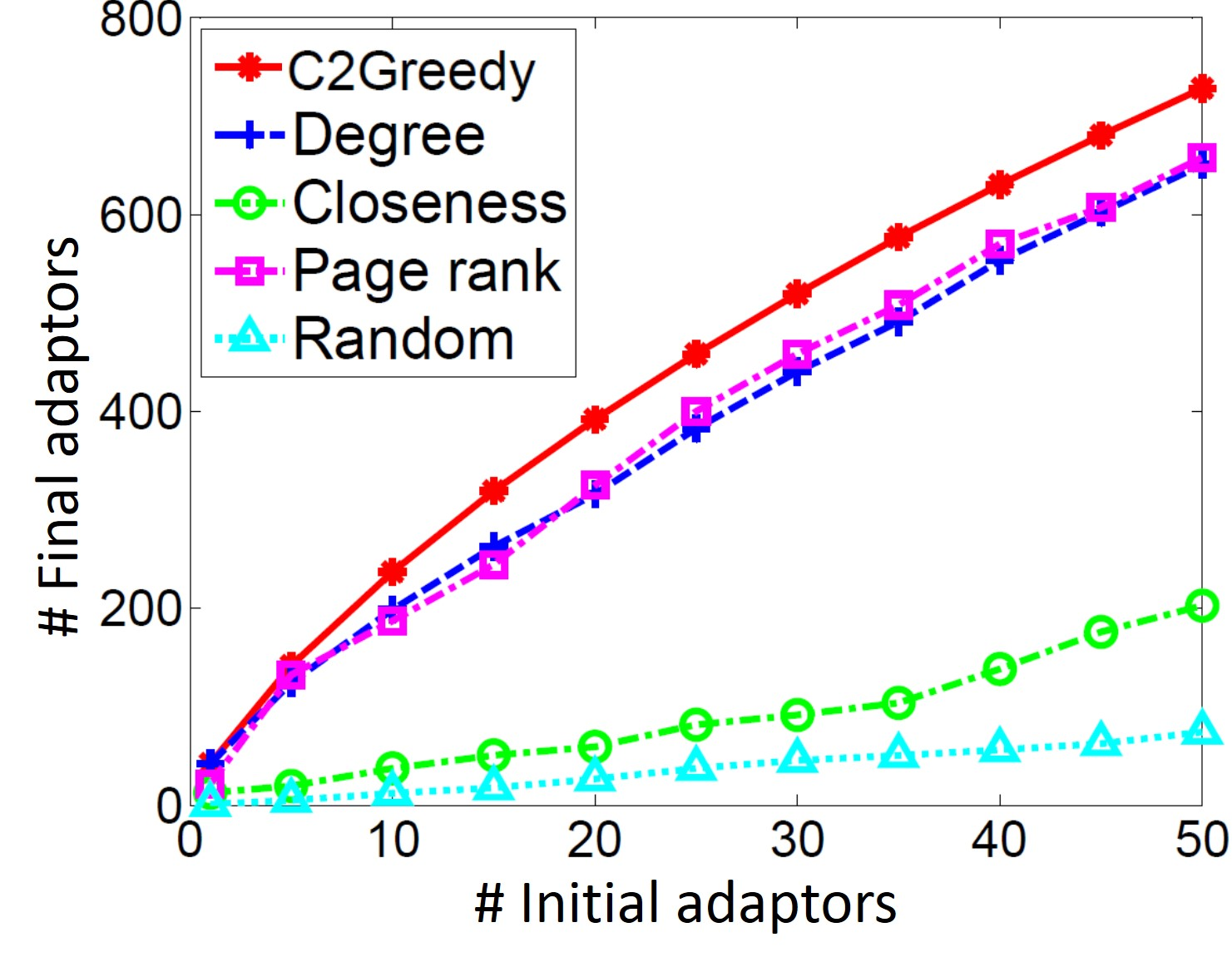} }}%
	~
	\subfloat[hepPh citation]{{\includegraphics[width=0.3\textwidth]{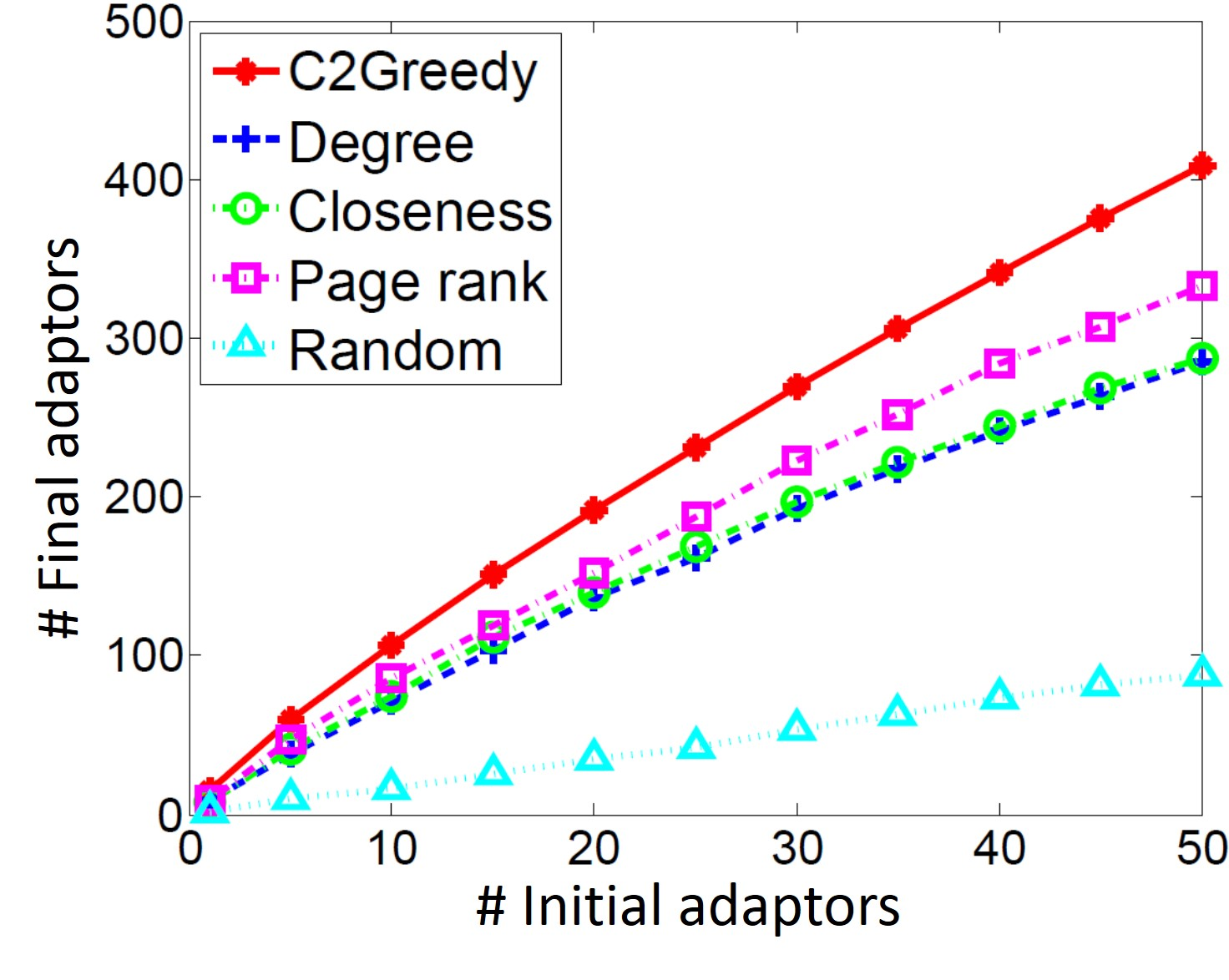} }}%
	~
	\subfloat[facebook]{{\includegraphics[width=0.3\textwidth]{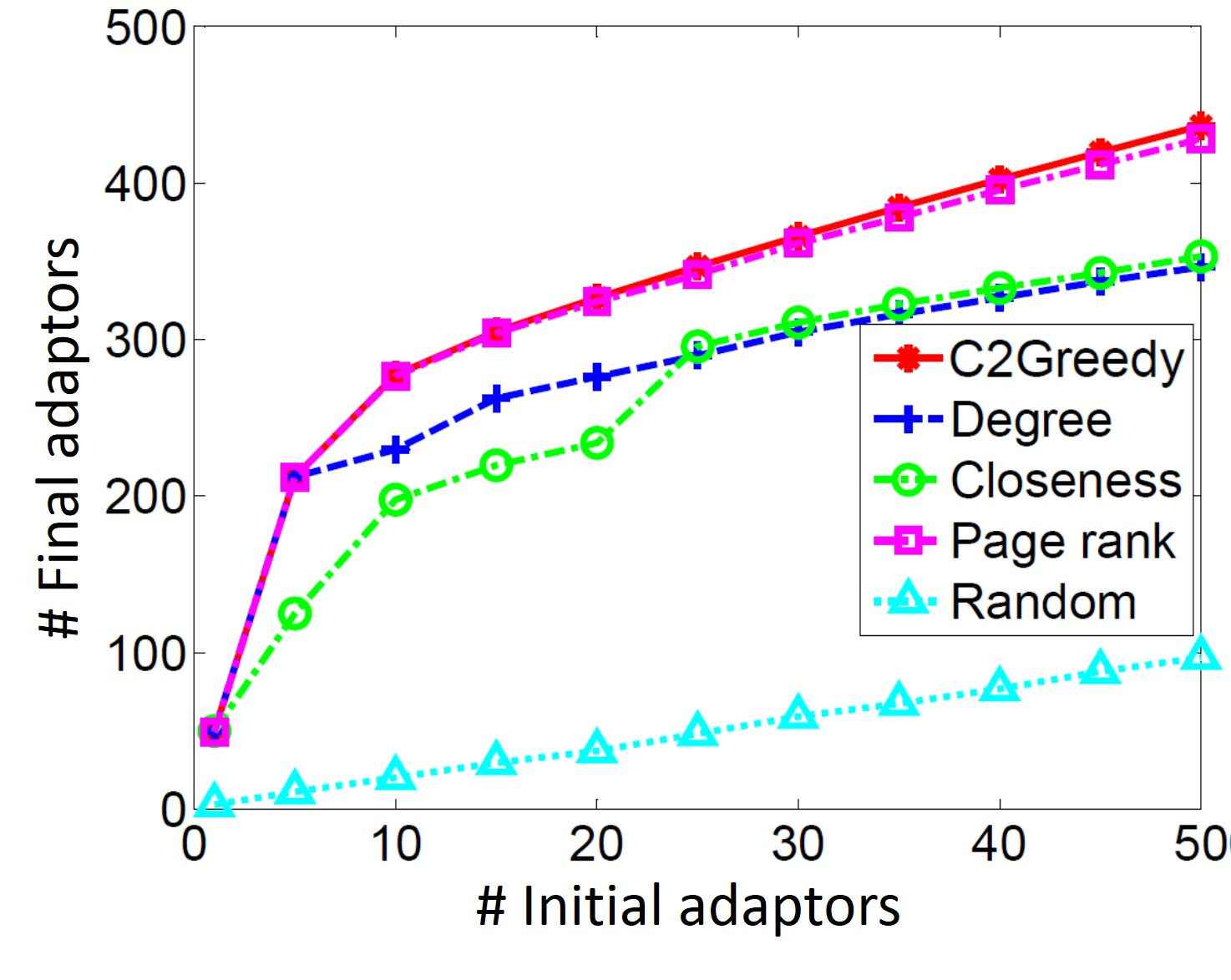} }}%
	\caption{Influence spread comparison between 5 different ``most influential nodes" methods for a wide choice of initiator sizes.}%
	\label{fig:k_init}%
\end{figure}

\section{Fast Computation of Dynamic Network Reachability Queries}\label{sec:reachability}
Reachability information between nodes in a network is crucial for a
wide range of applications  from gene interactions in bioinformatics
\cite{van2000representing} to XML query processing
\cite{chamberlin2002xquery}. Given a network, a reachability query $R(s,
t)$ has a binary answer with 1 indicating that target node $t$ is
reachable from source node $s$, and 0 representing that it is
not.  Several efficient algorithms have been devised to answer reachability queries when the
network is {\em static}~\cite{chen2008efficient,merz2014preach,jin2013simple,wang2006dual}.
However, few efficient solutions have been developed to answer
reachability query for {\em dynamic} networks, {\em e.g.}, after node or
link failures. For example,  garbage collection in programming
languages requires dynamic (re-)computation of reachability  to balance the reclamation of memory, which might be
reallocated. The speed of answering reachability queries affect the performance of  applications~\cite{michael2002safe}. 

As a final application of the fundamental tensor, we illustrate how it can
be employed to develop an efficient algorithm to answer reachability
queries for dynamic networks. Here we do {\em not} require the network $G$
under consideration (before or after failures) be strongly connected,
otherwise the reachability query problem is trivial. For simplicity of
exposition, in the following we only consider node failures. Similar in
Section~\ref{sec:most-influential}, we add an exogenous node
$o$ to network $G$ and connecting all the nodes to it. We note that
this  extended network $G^o$ has only one recurrent equivalence class,
and $\boldsymbol{F}_{sto}$ exists for any pairs of $s$ and $t$.  Moreover,
 $\boldsymbol{F}_{sto}$ is non-zero if and only $t$ is reachable from
 $s$ in $G$. This is because with non-zero probability a random walk will  visit every node that is reachable
 from $s$  before hitting $o$. By pre-computing the fundamental matrix
 $F^{\{o\}}$ once, we can answer any reachability query
 $R(s,t)$ in constant time using $F^{\{o\}}$ by  performing a table look-up.

Now suppose a set of nodes, $\mathcal{F}$, fail. We claim that we can
answer the {\em dynamic} reachability query $R(s,t,\mathcal{F})$ (after
the nodes in $\mathcal{F}$ fail, but {\em without} prior knowledge of the
node failure set $\mathcal{F}$)  in $O(|\mathcal{F}|)$. In particular,
if $|\mathcal{F}|$ is of a constant order $O(1)$ compared to the size
of network $|\mathcal{V}|$, then the queries are answered in constant $O(1)$
time.  This is achieved by leveraging Theorem~\ref{Nlemma1} for incremental computation of the
fundamental matrix. Let 
 $\mathcal{S}=\mathcal{F}\cup\{o\}$ and define $\boldsymbol{F}_{st\mathcal{S}}$
\begin{equation}
\boldsymbol{F}_{st\mathcal{S}}=\boldsymbol{F}_{sto}-\boldsymbol{F}_{s\mathcal{F}o}\boldsymbol{F}^{-1}_{\mathcal{F}\mathcal{F}o}\boldsymbol{F}_{\mathcal{F}to},
\end{equation}
which is the tensor form of $F^{\mathcal{S}}_{st} =
F^{\{o\}}_{st}-F^{\{o\}}_{s\mathcal{F}}(F^{\{o\}}_{\mathcal{F}\mathcal{F}})^{-1}F^{\{o\}}_{\mathcal{F}t}$. Note
that the sub-matrix $(F^{\{o\}}_{\mathcal{F}\mathcal{F}})^{-1}$  is
non-singular. This comes from the fact that $F^{\{o\}}$ is an
inverse M-matrix (an inverse M-matrix is a matrix whose
inverse is an M-matrix), hence each of its principal sub-matrix is also an
inverse M-matrix. We have the following statement:

\begin{statement} \label{stm:reachability}
	In the extended network $G^o$, $\boldsymbol{F}_{sto}$ is
        non-zero if and only if $t$ is reachable from $s$ in the
        original network $G$.  Furthermore,
 if the nodes in the set $\mathcal{F}$ fail,
 $\boldsymbol{F}_{st\mathcal{S}}$ is non-zero (where $\mathcal{S}=\mathcal{F}\cup\{o\}$) if and only if $t$ is
 still reachable from $s$ in network $G$ after the failures.
\end{statement}
Using the above statement and Theorem~\ref{Nlemma1}, we can answer
({\em static} and {\em dynamic}) reachability queries both before and after
failures in constant times (for a constant size node failure set
$\mathcal{F}$) by pre-computing $\boldsymbol{F}_{::o}$ (=$F^{\{o\}})$
and storing the results in a table.  The method for answering
reachability queries is summarized in Algorithm (\ref{alg1}). The function
${1}_{\{b\}}$ is an indicator function which is equal to 1 if
$b=True$ and 0 if $b=False$.

\begin{algorithm}                      
	\caption{ \sc Answering a reachability query }
	\label{alg1}                           
	\begin{algorithmic}[1]
		\small                    
		\STATE {\bfseries query:} $R(s,t,\sim\mathcal{F})$
		\STATE {\bfseries input:} transition matrix $P$ of the extended network $G^o$
		\STATE {\bfseries precomputed oracle:} $\boldsymbol{F}_{::o}=(I-P_{\setminus o})^{-1}$
		\STATE {\bfseries output:} answer to reachability queries. 
		\IF{$\mathcal{F}= \emptyset$} \STATE{$R(s,t)={1}_{\{\boldsymbol{F}_{sto}>0\}}$} \ELSE \STATE{$R(s,t,\sim\mathcal{F})={1}_{\{\boldsymbol{F}_{sto}-\boldsymbol{F}_{s\mathcal{F}o}\boldsymbol{F}^{-1}_{\mathcal{F}\mathcal{F}o}\boldsymbol{F}_{\mathcal{F}to}>0\}}$} \ENDIF
	\end{algorithmic}
\end{algorithm}

\section{Conclusion}
We revisited the fundamental matrix in Markov chain theory and extended it to the fundamental tensor which we showed that can be built much more efficiently than computing the fundamental matrices separately ($O(n^3)$ vs. $O(n^4)$) for the applications that the whole tensor is required. We also showed that fundamental matrix/tensor provides a unifying framework to derive other Markov metrics and find useful relations in a coherent way. We then tackled four interesting network analysis applications in which fundamental tensor is exploited to provide effective and efficient solutions: 
1) we showed that fundamental tensor unifies various network random-walk measures, such as distance, centrality measure, and topological index, via summation along one or more dimensions of the tensor; 2) we extended the definition of articulation points to the directed networks and used the (normalized) fundamental tensor to compute all the articulation points of a network at once. We also devised a metric to measure the load balancing over nodes of a network. Through extensive experiments, we
evaluated the load balancing in several specifically-shaped networks and real-world networks; 3) we showed that (normalized) fundamental tensor can be exploited to infer the cascade and spread of a phenomena or an influence in social networks. We also derived a formulation to find the most influential nodes for maximizing
the influence spread over the network using the (normalized) fundamental tensor, and demonstrated the efficacy of our method compared to other well-known ranking methods through multiple real-world network experiments; and 4) we presented a
dynamic reachability method in the form of a pre-computed oracle which is cable of answering
to reachability queries efficiently both in the case of having failures or no failure
in a general directed network.

\section*{Acknowledgement}
The research was supported in part by  US DoD DTRA grants HDTRA1-09-1-0050
and HDTRA1-14-1-0040, and ARO MURI Award W911NF-12-1-0385. We would also like to thank our colleagues, Amir Asiaei and Arindam Banerjee, who helped us with the results reported in Section \ref{sec:most-influential}.


\bibliographystyle{elsarticle-num} 
\bibliography{main-zhili}


%
%
%

\newpage
\section{Appendix} \label{sec:appendix}
In this appendix, we provide the detailed derivations regarding the
relations between the stochastic form and matrix form of hitting time
and hitting costs, respectively.

$\bullet$ \textbf{ Relation between the stochastic form and matrix form of hitting time}

Let $t$ be the only absorbing node and rest of nodes belong to $\mathcal{T}$, then:
\begin{eqnarray}
H_s^{\{t\}}&=&\sum_{k=1} k\sum_{m\in\mathcal{T}} {[P^{k-1}_{\mathcal{TT}}]}_{sm}{[P_{\mathcal{TA}}]}_{mt}=\sum_{k=1} k\sum_{m\in\mathcal{T}} {[P^{k-1}_{\mathcal{TT}}]}_{sm}(1-\sum_{j\in\mathcal{T}}{[P_{\mathcal{TT}}]}_{mj})  \nonumber
\\&=& \sum_{k=1} k(\sum_{m\in\mathcal{T}} {[P^{k-1}_{\mathcal{TT}}]}_{sm} - \sum_{j\in\mathcal{T}}{[P^{k}_{\mathcal{TT}}]}_{sj}) = \sum_{m\in\mathcal{T}}\sum_{k=1} k({[P^{k-1}_{\mathcal{TT}}]}_{sm} - {[P^k_{\mathcal{TT}}]}_{sm}) \nonumber
\\&=& \sum_{m\in\mathcal{T}} {[P^{k-1}_{\mathcal{TT}}]}_{sm} = \sum_{m} F^{\{t\}}_{sm},
\end{eqnarray}
which is the matrix form of hitting time Eq.(\ref{eq:H}).

$\bullet$ \textbf{ Relation between the stochastic form and matrix form of hitting cost}

Let $\mathcal{Z}_{sm}$ be the set of all possible walks from $s$ to $m$
and $\zeta_j$ be the $j$-th walk from this set. We use
$\mathcal{Z}_{sm}(l)$ to denote the subset of walks whose total length
is $l$, and $\mathcal{Z}_{sm}(k,l)$ to specify the walks which have
total length of $l$ and total step size of $k$. Recall that a walk (in
contrast to a path) can have repetitive nodes, and the length of a
walk is the sum of the edge weights  in the walk and its step size is
the number of edges. Recall that
$\mathbb{P}(\eta_t=l|X_0=s)$ denotes the probability of hitting $t$ in
total length of $l$ when starting from $s$, which can be obtained from
the probability of walks: $\mathbb{P}(\eta_t=l|X_0=s)=\sum_{\zeta_j\in
  Z_{st}(l)} \textrm{Pr}_{\zeta_j}$. Probability of walk $\zeta_j$
denoted by $\textrm{Pr}_{\zeta_j}$ is computed by the production over
the  probabilities of passing edges:
$\textrm{Pr}_{\zeta_j}=p_{sv_1}p_{v_1v_2}...p_{v_{k-1}m}$, where
$p_{vu}$ is the edge probability from $v$ to $u$. The summation over
the walk probabilities is computed using the following relation: 
\begin{equation} \label{eq:Pr_walk}
\sum_{\zeta_j\in Z_{sm}(k)} \textrm{Pr}_{\zeta_j}=
\begin{cases} 
[P_{\mathcal{TT}}^k]_{sm} & \text{if } m\in\mathcal{T} \\
[P_{\mathcal{TT}}^{k-1}P_{\mathcal{TA}}]_{sm} & \text{if } m\in\mathcal{A}
\end{cases}
\end{equation}
With this introduction, the derivation of the stochastic form of
hitting cost Eq.(\ref{eq:stoch_B}) can proceed as follows:

\begin{eqnarray}
\U_s^{\{t\}}&=&\sum_{l\in\mathcal{C}}l\sum_{k=1}^{<\infty}\sum_{\zeta_j\in \mathcal{Z}_{st}(k,l)} \textrm{Pr}_{\zeta_j} 
\\&=& \sum_{l\in\mathcal{C}}\sum_{k=1}^{<\infty}\sum_{\zeta_j\in \mathcal{Z}_{st}(k,l)}l_{\zeta_j} \textrm{Pr}_{\zeta_j} \nonumber
\\&=& \sum_{\zeta_j\in \mathcal{Z}_{st}}l_{\zeta_j} \textrm{Pr}_{\zeta_j}
\\&=&\sum_{\zeta_j\in \mathcal{Z}_{st}} \textrm{Pr}_{\zeta_j} \sum_{k=1}^{k_{\zeta_j}} w_{v_{k-1}v_k} 
\\&=& \sum_{\zeta_j\in \mathcal{Z}_{st}}\sum_{k=1}^{k_{\zeta_j}}[\prod_{i=1}^{k}p_{v_{i-1}v_i}.(p_{v_kv_{k+1}}w_{v_kv_{k+1}}).\prod_{i=k+2}^{k_{\zeta_j}}p_{v_{i-1}v_i}]
\\&=& \sum_{e_{xy}\in E}p_{xy}w_{xy}(\sum_{\zeta_j\in \mathcal{Z}_{sx}}\textrm{Pr}_{\zeta_j}).(\sum_{\zeta_i\in \mathcal{Z}_{yt}}\textrm{Pr}_{\zeta_i}) \label{eq:multiplicationPrinciple}
\\&=& \sum_{e_{xy}\in E}p_{xy}w_{xy}(\sum_k\sum_{\zeta_j\in \mathcal{Z}_{sx}(k)}\textrm{Pr}_{\zeta_j}).(\sum_k\sum_{\zeta_i\in \mathcal{Z}_{yt}(k)}\textrm{Pr}_{\zeta_i})
\\&=& \sum_{e_{xy}\in E}p_{xy}w_{xy}(\sum_k [P_{\mathcal{TT}}^k]_{sx}).(\sum_k [P_{\mathcal{TT}}^{k-1}P_{\mathcal{TA}}]_{yt}) \label{eq:table}
\\&=& \sum_{e_{xy}\in E}p_{xy}w_{xy}F^{\{t\}}_{sx} Q^{\{t\}}_y
\\&=& \sum_{e_{xy}\in E}p_{xy}w_{xy}F^{\{t\}}_{sx} \label{eq:Q=1}
\\&=& \sum_x F^{\{t\}}_{sx} \sum_{y\in \mathcal{N}_{out}(x)} p_{xy}w_{xy}
\\&=& \sum_x F^{\{t\}}_{sx} r_x,
\end{eqnarray}
where $l_{\zeta_j}$ and $k_{\zeta_j}$ denote the length and step size
of a  walk $\zeta_j$, respectively, and $r_x=\sum_{y\in
  \mathcal{N}_{out}(x)} p_{xy}w_{xy}$ is the average outgoing cost of
node $x$. In the above derivation,  Eq.(\ref{eq:table}) comes from
Eq.(\ref{eq:Pr_walk}), and Eq.(\ref{eq:Q=1}) follows from the fact that $Q^{\{t\}}_y=1$ when having
$t$ as the only absorbing node in the network and reachable from all
the other nodes.

\end{document}